\documentclass[journal,comsoc]{IEEEtran}

\usepackage{amsmath,amsthm,relsize}
\usepackage{amsfonts, amssymb, cuted}
\usepackage{cleveref}
\usepackage{mathtools}
\usepackage{algorithm}
\usepackage[noend]{algpseudocode}
\usepackage{cite}
\usepackage{xcolor}
\usepackage{soul}
\usepackage{graphicx}
\usepackage{tikz}
\usepackage{pgfplotstable}
\usepackage{pgfplots}
\usepackage{nicefrac}
\usepackage{enumitem}
\usepackage{support-caption}
\usepackage{subcaption}
\usepackage[font=footnotesize]{caption}
\usepackage{multirow}
\pgfplotsset{compat=1.15}
\usepackage{relsize}
\usetikzlibrary{patterns}

\newcommand{\rev}[1]{{\color{black} #1}}
 \setlength{\textfloatsep}{2pt plus 3pt minus 2pt}

 \setlength{\abovecaptionskip}{3pt plus 3pt minus 2pt} 
 \setlength{\belowcaptionskip}{3pt plus 3pt minus 2pt} 

 \setlength{\belowdisplayskip}{3pt plus 3pt minus 2pt}
 \setlength{\belowdisplayshortskip}{3pt plus 3pt minus 2pt}
 \setlength{\abovedisplayskip}{3pt plus 3pt minus 2pt}
 \setlength{\abovedisplayshortskip}{3pt plus 3pt minus 2pt}

\newtheorem{thm}{Theorem}

\newtheorem{exmp}{Example}
\theoremstyle{definition}

\newtheorem{rem}{Remark}

\newcommand{\CA}[0]{{\mathcal{A}}}
\newcommand{\CB}[0]{{\mathcal{B}}}

\newcommand{\CD}[0]{{\mathcal{D}}}

\newcommand{\CF}[0]{{\mathcal{F}}}

\newcommand{\CR}[0]{{\mathcal{R}}}

\newcommand{\CX}[0]{{\mathcal{X}}}

\newcommand{\CZ}[0]{{\mathcal{Z}}}

\newcommand{\Ba}[0]{{\mathbf{a}}}

\newcommand{\Bc}[0]{{\mathbf{c}}}
\newcommand{\Bd}[0]{{\mathbf{d}}}
\newcommand{\Be}[0]{{\mathbf{e}}}

\newcommand{\Bh}[0]{{\mathbf{h}}}

\newcommand{\Bk}[0]{{\mathbf{k}}}

\newcommand{\Bp}[0]{{\mathbf{p}}}

\newcommand{\Br}[0]{{\mathbf{r}}}
\newcommand{\Bs}[0]{{\mathbf{s}}}

\newcommand{\Bu}[0]{{\mathbf{u}}}
\newcommand{\Bv}[0]{{\mathbf{v}}}
\newcommand{\Bw}[0]{{\mathbf{w}}}
\newcommand{\Bx}[0]{{\mathbf{x}}}

\newcommand{\BD}[0]{{\mathbf{D}}}

\newcommand{\BG}[0]{{\mathbf{G}}}

\newcommand{\BI}[0]{{\mathbf{I}}}

\newcommand{\BV}[0]{{\mathbf{V}}}


\newcommand{\MyRect}[2]{(#2-1+0.1,6-#1+0.1) rectangle (#2-0.1,6-#1+1-0.1)}
\newcommand{\FillBlack}[2]{\filldraw[gray!50]\MyRect{#1}{#2}}
\newcommand{\FillGray}[2]{\filldraw[black!70]\MyRect{#1}{#2}}

\newcommand{\MyRectNf}[2]{(#2-1+0.1,4-#1+0.1) rectangle (#2-0.1,4-#1+1-0.1)}
\newcommand{\FillBlackNf}[2]{\filldraw[gray!50]\MyRectNf{#1}{#2}}
\newcommand{\FillGrayNf}[2]{\filldraw[black!70]\MyRectNf{#1}{#2}}

\usepackage{ellipsis}
\usetikzlibrary{calc,quotes}
\usetikzlibrary{decorations.pathreplacing,decorations.markings,shapes.geometric}
\tikzset{naming/.style={align=center,font=\small}}
\tikzset{rotatenaming/.style={align=center,font=\tiny}}
\tikzset{antenna/.style={insert path={-- coordinate (ant#1) ++(0,0.25) -- +(135:0.25) + (0,0) -- +(45:0.25)}}}
\tikzset{station/.style={naming,draw,shape=dart,shape border rotate=90, minimum width=10mm, minimum height=10mm,outer sep=0pt,inner sep=3pt}}
\tikzset{mobile/.style={naming,draw,shape=rectangle,minimum width=12mm,minimum height=6mm, outer sep=0pt,inner sep=3pt}}
\tikzset{radiation/.style={{decorate,decoration={expanding waves,angle=90,segment length=4pt}}}}
\tikzset{user terminal/.style={naming,draw,shape=rectangle,minimum width=3mm,minimum height=5mm, outer sep=0pt,inner sep=3pt}}
\tikzset{user cache/.style={rotatenaming,draw,shape=cylinder,minimum width=7mm,minimum height=5mm, outer sep=0pt,inner sep=3pt, shape border rotate=90}}

\tikzset{zigzag/.style = {
    to path={ -- ($(\tikztostart)!.55!-9:(\tikztotarget)$) --
                 ($(\tikztostart)!.45!+9:(\tikztotarget)$) -- (\tikztotarget)
             \tikztonodes},sharp corners}
        }

\newcommand{\MBase}[1]{%
\begin{tikzpicture}
\node[station] (base) {#1};

\draw[line join=bevel] (base.100) -- (base.80) -- (base.110) -- (base.70) -- (base.north west) -- (base.north east);
\draw[line join=bevel] (base.100) -- (base.70) (base.110) -- (base.north east);

\draw[line cap=rect] ([xshift=-.1768cm,yshift=.6pt]base.north -| base.right tail) [antenna=1];
\draw[line cap=rect] ([yshift=.6pt]ant1 |- base.north) -- node[above,shape=rectangle,inner ysep=+.3333em]{\dots} ([xshift=.1768cm,yshift=.6pt]base.north -| base.left tail) [antenna=2];

\end{tikzpicture}
}

\newcommand{\UT}[1]{%
\begin{tikzpicture}[every node/.append style={rectangle,minimum width=0pt}]
\node[user terminal] (box) {};
\node[user cache] at (1.0,0) (cache) {};

\draw []  node at (1.0,0.01) {\tiny{$\CZ(#1)$}};
\draw [] (box.east) to (cache.west);



\draw (box.north) [antenna=1];
\end{tikzpicture}
}






\begin{document}

\title{
Low-Complexity High-Performance\\Cyclic Caching for Large MISO Systems
\thanks{This work was supported by the Academy of Finland under grants no. 319059 (Coded Collaborative Caching for Wireless Energy Efficiency) and 318927 (6Genesis Flagship), as well as by the ERC project DUALITY (grant agreement no. 725929)}}
\date{September 2020}

\author{
\IEEEauthorblockN{MohammadJavad~Salehi, Emanuele Parrinello, Seyed~Pooya~Shariatpanahi,\\~Petros Elia and Antti~T\"olli
        }
        
\thanks{MohammadJavad Salehi and Antti T\"olli are with the Center for Wireless Communications (CWC), University of Oulu, 90570 Filand, e-mail: \{first\_name.last\_name\}@oulu.fi. Emanuele Parrinello and Petros Elia are with the Communication Systems Department, Eurecom, Sophia Antipolis, 06410 France, e-mail: \{first\_name.last\_name\}@eurecom.fr. Seyed Pooya Shariatpanahi is with School of Electrical and Computer Engineering, College of Engineering, University of Tehran, 1439957131 Iran,
email: p.shariatpanahi@ut.ac.ir}%
\thanks{Part of this work is presented in IEEE International Conference on Communications (ICC), June 2020.}%
}

\maketitle

\begin{abstract}
Multi-antenna coded caching is known to combine a global caching gain that is proportional to the cumulative cache size found across the network, with an additional spatial multiplexing gain that stems from using multiple transmitting antennas. However, a closer look reveals two severe bottlenecks; the well-known exponential subpacketization bottleneck that dramatically reduces performance when the communicated file sizes are finite, and the considerable optimization complexity of beamforming multicast messages when the SNR is finite. We here present an entirely novel caching scheme, termed \emph{cyclic multi-antenna coded caching}, whose unique structure allows for the resolution of the above bottlenecks in the crucial regime of many transmit antennas. For this regime, where the multiplexing gain can exceed the coding gain, our new algorithm is the first to achieve the exact one-shot linear optimal DoF with a subpacketization complexity that scales only linearly with the number of users, and the first to benefit from a multicasting structure that allows for exploiting uplink-downlink duality in order to yield optimized beamformers ultra-fast. In the end, our novel solution provides excellent performance for networks with finite SNR, finite file sizes, and many users.
\end{abstract}

\begin{IEEEkeywords}
Coded Caching;
Multi-Antenna Communication;
Low-Subpacketization;
Optimized Beamforming;
Finite-SNR.
\end{IEEEkeywords}

\vspace{-4pt}
\section{Introduction}
Wireless communication networks are under mounting pressure to support the exponentially increasing volumes of multimedia content, as well as to support the imminent emergence of applications such as wireless immersive viewing and reliable autonomous driving~\cite{cisco2018cisco}. For the efficient delivery of such multimedia content, the work of  Maddah-Ali and Niesen~\cite{maddah2014fundamental} proposed the idea of \emph{coded caching} as a means of increasing the data rates by exploiting cache content across the network. This approach considered a single-stream (single-antenna) downlink network of cache-enabled receiving users who can pre-fetch data into their cache memories, in a way that --- during the subsequent content delivery --- the achievable rates can be boosted by a multiplicative factor that is proportional to the cumulative cache size across the entire network. The key to achieving this speedup was the ability of coded caching to successfully multicast individual messages to many users at a time, such that each user can use its cache content to remove unwanted messages from the received signal.  In this context, a system that can multicast to $t+1$ users at a time is said to enjoy a (cache-aided) Degrees-of-Freedom (DoF) performance $t+1$, which also matches the aforementioned multiplicative speedup factor in the high-SNR regime. This quantity $t$ --- which is often referred to as the (additive) \emph{coded caching gain} --- depends on the number of receiving users and the size of their cache, and it effectively describes the redundancy with which data can be cached across the network. 

Motivated by the unavoidable dominance of multi-antenna paradigms in wireless communications~\cite{rajatheva2020white}, Shariatpanahi et al.~\cite{shariatpanahi2016multi,shariatpanahi2018physical} explored the cache-aided multi-antenna setting, for which it revealed that this same caching gain could, in fact, be maintained in its fullest. In a basic downlink scenario with $L$ such transmit antennas, the work in~\cite{shariatpanahi2018physical} developed a method that achieved a DoF of $t+L$, which was shown to be optimal under basic assumptions of uncoded cache placement and one-shot data delivery (cf.~\cite{lampiris2018resolving}).  

However, despite the original theoretical promises for large caching gains, in reality, coded caching can suffer from severe bottlenecks that dramatically limit these gains. Undoubtedly the most damaging of these is the well-known \emph{subpacketization bottleneck}, which stems from the fact that the aforementioned caching gains require (cf.\cite{shanmugam2016finite},\cite{yan2017placement}) that each file is split into a number of subfiles that scales exponentially with the number of receiving users. This requirement is exacerbated in multi-antenna coded caching approaches (cf.~\cite{shariatpanahi2018physical}), where the subpacketization (and thus the file-size requirements) can far exceed those of the original scheme in~\cite{maddah2014fundamental}. In essence, subpacketization requirements rendered coded caching hard to implement in most moderate- or large-sized networks. 

While though the subpacketization requirements for achieving full caching gains in the single antenna setting are indeed fundamental and generally unavoidable (cf.\cite{shanmugam2016finite},\cite{yan2017placement}), a recent new approach in~\cite{lampiris2018adding} has shown that these limitations are not fundamental in the multi-antenna setting. As we now know from~\cite{lampiris2018adding}, activating multiple transmit antennas can effectively decompose the cache-aided network in a manner that dramatically alleviates the subpacketization bottleneck, thus strongly boosting the true (subpacketization constrained) performance. 
This performance boost, as well as the ability to achieve the theoretical promises of multi-antenna coded caching with exceedingly small subpacketization complexity, offers a powerful motivation for meaningfully combining coded caching with multi-antenna communications. 

It is the case though that, to date, this decomposition principle as it was developed in~\cite{lampiris2018adding}, applies with full optimality only in networks with essentially a modest number of transmit antennas, and in particular, only in networks where the spatial multiplexing gain (which can go up to $L$) does not exceed the coded caching gain $t$.  In the last two years, extending this optimality to networks with `larger' antenna arrays has been a known open problem, which is indeed motivated by the upcoming prevalence of very large antenna arrays. 

We present a new multi-antenna coded caching structure that resolves this theoretical problem; it achieves the exact optimal one-shot linear DoF $L+t$ even when $L \ge t$ and does so with ultra-low subpacketization. Perhaps more importantly, our scheme, which employs a novel cyclic structure, additionally allows for an ultra-efficient implementation of optimized precoders, thus yielding excellent performance in the low- and mid-SNR range for even large networks.
\subsection{Prior Work}

\subsubsection{Single- and multi-antenna coded caching}
The original coded caching scenario in~\cite{maddah2014fundamental} considers a single-antenna transmitter that communicates to $K$ cache-aided receiving users via a single-stream, normalized-capacity, symmetric broadcast channel. The transmitter hosts a library of $ N $ equal-sized files, and each user has a cache memory of size equal to the size of $ M $ files. The setting considers two distinct phases: the cache placement phase, which occurs during the off-peak hours, and the content delivery phase, which occurs during the peak hours. The first phase allows for the users' cache memories to be filled with library content, and the second phase employs basic cache-aided multicasting in order to serve $t+1$ users at a time, thus yielding a DoF of $t+1$, where $t\coloneqq \frac{KM}{N}$. Decoding is achieved by having each user exploit their locally cached content in order to remove the $t$ undesired messages from the received signals. 
This scheme was proven to be exactly optimal under the constraint of uncoded cache placement~\cite{yu2017exact,wan2016optimality} and optimal within a gap of two in the general case~\cite{yu2018characterizing}. Soon after~\cite{maddah2014fundamental}, the work in~\cite{maddah2015decentralized} (see also~\cite{girgis2017decentralized,xu2018fundamental}) proposed a decentralized version of the scheme, which, in order to provide good gains, required the file sizes to be very large. In contrast, the work in~\cite{jin2019new} proposed a single-antenna decentralized scheme for finite-size files, but with significantly worse performance than~\cite{maddah2014fundamental,maddah2015decentralized}.
Apart from the decentralized setting, the use of coded multicasting was extended to other scenarios such as hierarchical, device-to-device (D2D), and fog radio access networks~\cite{karamchandani2016hierarchical,ji2015fundamental,roig2018storage}. One of the most important scenarios is the multi-antenna/multi-transmitter case, for which the scheme in the aforementioned work in~\cite{shariatpanahi2016multi} achieves a sum-DoF of $t+L$, which was later proved in~\cite{naderializadeh2017fundamental} to be optimal within a factor of two among all linear one-shot schemes (this gap was then tightened in~\cite{lampiris2018resolving}). The cache-aided interference channel studied in~\cite{naderializadeh2017fundamental} was later extended to cellular networks~\cite{naderializadeh2019cache} and heterogeneous parallel channels with\textbackslash without CSIT~\cite{piovano2019centralized}. Unlike~\cite{naderializadeh2017fundamental}, the work in~\cite{hachem2018degrees} characterized the global sum-DoF of the cache-aided interference channel (without any restriction on the type of schemes) by proposing a method based on interference alignment techniques that achieves an approximately optimal DoF.. In the specific case where each transmitter can cache the entire library (coinciding with the multi-antenna coded caching problem), the scheme in~\cite{hachem2018degrees} achieves a larger sum-DoF than the optimal linear single-shot sum-DoF. However, the complexity of interference alignment makes the scheme very far from practical, which contrasts with our paper that aims to provide a practical scheme with low complexity.  In this regard, we consider only one-shot linear schemes throughout this paper, and by optimal sum-DoF, we mean optimal achievable sum-DoF among all such schemes.

\subsubsection{Multi-antenna coded caching in the finite-SNR regime}
The implementation of coded caching techniques in multi-antenna wireless networks initially emphasized the (high-SNR) DoF setting and thus focused on using basic zero-forcing (ZF) precoders~\cite{shariatpanahi2017multi,shariatpanahi2018physical}.  
Subsequently, the emphasis was shifted on the lower SNR regimes, with the work in~\cite{tolli2018multicast} replacing ZF precoders with optimized beamformers that introduced the capability of controlling, rather than completely nulling out, the inter-stream interference. The subsequent work in~\cite{tolli2017multi} emphasized both the performance and complexity of beamformer designs, introducing novel optimized precoders that properly controlled the multiplexing gain and the size of the corresponding multiple-access channel (MAC) that is experienced during decoding. In particular, it was shown how operating at a multiplexing gain that is smaller than $L$ can reduce beamformer design complexity as well as yield higher beamforming gains, which are crucial in the low-SNR regime. Controlling the spatial multiplexing gain is also considered in~\cite{lampiris2019bridgingb}, where numerical simulations are used to find the best multiplexing gain for various network parameters, including the coded caching gain. Taking another point of view, the work in~\cite{zhao2020multi} proposed a new scheme that limits the number of messages received by a user in each time slot, in order to reduce complexity while maintaining satisfactory rates. Similar works include~\cite{shariatpanahi2018multi,lampiris2019fundamental,bergel2018cache}. While numerical evaluations suggest that the aforementioned schemes can perform well, this performance is limited to very small network scenarios, mainly due to their massive subpacketization requirements.  

\subsubsection{Subpacketization bottleneck}
To date, in the single-antenna setting, any high-performance coded caching scheme requires a subpacketization that grows --- for fixed $\frac{M}{N}$ --- exponentially or near-exponentially with $K$.
The scheme in~\cite{maddah2014fundamental} requires subpacketization $\binom{K}{t}$, and as we know from~\cite{shanmugam2016finite}, decentralized schemes (cf.~\cite{maddah2015decentralized}) also require exponential subpacketization in order to achieve linear caching gains. Along similar lines, we know from~\cite{yan2018placement} that under basic assumptions, there exists no single-antenna coded caching scheme that enjoys both linear caching gains and linear subpacketization. Nevertheless, several works in the literature have tried to reduce the required subpacketization in the shared-link coded caching problem.
Notably, in~\cite{yan2017placement}, Placement Delivery Array (PDA) is proposed as a systematic approach for reducing subpacketization in centralized schemes. For a very large number of users $K$, in~\cite{shangguan2018centralized}, a hypergraph-based scheme is introduced to trade-off subpacketization with performance, and in~\cite{shanmugam2017coded}, it is shown that linearly scaling gains (with $K$) are achievable with a subpacketization that also scales almost linearly. 
A more recent effort was made in~\cite{chittoor2020subexponential}, where line graphs and projective geometry are used to achieve an asymptotically (as $K\rightarrow \infty$) constant performance with a sub-exponential subpacketization as long as the users' caches are sufficiently large. In the same line of research, other interesting works are~\cite{tang2018coded,michel2020placement,mahesh2020coded}. However, despite the large literature on the topic, no scheme is known to achieve good gains with a relatively small subpacketization and practical values for system parameters.

In the context of multi-antenna coded caching, the situation is different. While the original scheme in~\cite{shariatpanahi2018physical} required an astronomical subpacketization of $\binom{K}{t} \binom{K-t-1}{L-1}$, the recent work in~\cite{lampiris2018adding} showed that if $\frac{K}{L}$ and $\frac{t}{L}$ are both integers, the optimal DoF $t+L$ is achievable with a subpacketization of~$\binom{K/L}{t/L}$, which is dramatically less than the subpacketization in~\cite{maddah2014fundamental,shariatpanahi2018physical}. This directly means that under fixed subpacketization constraints (fixed file size), adding multiple antennas can multiplicatively boost the real (subpacketization-constrained) DoF by a factor of $L$.
The main idea behind the work in~\cite{lampiris2018adding} is to employ basic user-grouping techniques to endow groups of users with the same cache content and then apply a specific precoding approach that decomposes the network of users into effectively parallel coded caching problems. While, as we know from~\cite{parrinello2019fundamental}, for single-antenna setups, having shared caches between the users causes an inevitable DoF loss, the work in~\cite{lampiris2018adding} has proven that multi-antenna shared-cache setups need not suffer from DoF losses. Of course, this is valid under the assumption that $\frac{K}{L}$ and $\frac{t}{L}$ are integers.\footnote{The scheme of~\cite{lampiris2018adding} suffers DoF losses (and also increased subpacketization) if either $\frac{K}{L}$ or $\frac{t}{L}$ is non-integer. The DoF is reduced
by a multiplicative factor that can reach 2 when $L > t$, and can reach $\frac{3}{2}$ when $L < t$.}

Another interesting work can be found in~\cite{mohajer2020miso}, which proposes a DoF-optimal scheme that yields a reduction in transmission and decoding complexity compared to the optimized beamformer scheme of~\cite{tolli2017multi}, albeit with a small reduction in performance compared to \cite{tolli2017multi}, and also with an exponential subpacketization $\binom{K}{t}$. In another line of work,~\cite{lampiris2019bridging} provides a novel algorithm that reduces the channel state information (CSI) requirements, and does so with subpacketization $L_c \binom{K_c}{t}$, where $L_c \coloneqq \frac{L+t}{t+1}$ and $K_c \coloneqq \frac{K}{L_c}$. Finally,~\cite{salehi2019subpacketization,salehi2019beamformer} explore, under the assumption of $K = t+L$, how subpacketization can be traded-off with performance. To date, existing multi-antenna schemes either exhibit subpacketization requirements that are exponential in $K$, or do not experience DoF optimality in scenarios where $L>t$.

\subsection{Our Contribution}
Motivated by the ever-increasing sizes of antenna arrays, we proceed to bridge the aforementioned gap and provide a very low-subpacketization coded caching algorithm that achieves the optimal one-shot linear DoF of $L+t$ even in networks with `larger' antenna arrays such that $L\geq t$. 
Subsequently, knowing well that in the low-to-moderate SNR regimes beamforming gains can be as important as multiplexing gains, we proceed to consider the more general scenario where the multiplexing gain $\alpha \le L$ is traded off with an ability to beamform in a manner that compensates the well-known effects of the worst-user channel condition. In our case, the multiplexing gain $ \alpha \ge t$ is treated as a design parameter calibrated not only for yielding excellent finite-SNR performance but also for fine-tuning subpacketization and complexity requirements. 

The proposed multi-antenna \emph{cyclic-caching scheme} enjoys a novel structure that attains the chosen sum-DoF of $t+\alpha$, as well as the holy trinity of high beamforming gains, reduced subpacketization, and reduced beamforming complexity. In particular, the scheme requires an ultra-low subpacketization  $\frac{K(t+\alpha)}{\phi_{K,t,\alpha}^2}$, where $\phi_{K,t,\alpha} = gcd(K,t,\alpha)$; this is currently the smallest known subpacketization out of all one-shot linear schemes with optimal sum-DoF. 
Most importantly, our cyclic caching method also eliminates the requirement of multicasting, thus interestingly enabling optimized beamformers to be designed with massively reduced computational complexity. These optimized beamformers allow for an interplay between cache-based cancellation, nulling out, and controlling the interference, resulting in a very considerable performance boost in low- and mid-SNR communications. In the end, our work provides a method for a high-performance hyper-efficient use of coded caching in large multi-antenna networks and for any SNR value. 

\subsection{Structure and Notation}
We use $[K]$ to denote the set $\{1,2,...,K\}$ and $[i:j]$ to represent the vector $[i \;\, i+\!1 \;\, ... \;\, j]$. Boldface upper- and lower-case letters denote matrices and vectors, respectively. $\BV[i,j]$ refers to the element at the $i$-th row and $j$-th column of matrix $\BV$, and $\Bw[i]$ represents the $i$-th element in vector $\Bw$.
Moreover, $\Bw=[\Bu;\Bv]$ refers to a vector $\Bw$ that is formed by concatenating vectors $\Bu$ and $\Bv$. Sets are denoted by calligraphic letters. For two sets $\CA$ and $\CB$, $\CA \backslash \CB$ refers to the set of elements in $\CA$ that are not in $\CB$, and $|\CA|$ is the number of elements in $\CA$.

The rest of the paper is organized as follows. Section~\ref{section:sys_model} presents the system model, while Section~\ref{section:ICCPaper} presents the new cyclic caching scheme. In this section, we show that for any $\alpha \ge t$, a sum-DoF of $t+\alpha$ is possible, first with subpacketization $K(t+\alpha)$, and then with a subpacketization that is further reduced by a factor of $\phi_{K,t,\alpha}^2$. Section~\ref{section:sim_results} presents complexity analysis and performance simulations, while Section~\ref{section:conclusion} concludes the paper.

\section{System Model}
\label{section:sys_model}
\subsection{Network Setup}
We consider a multiple-input, single-output (MISO) broadcast setting, where a single server, equipped with $ L $ transmit antennas, communicates with $ K $ single-antenna receiving users over a shared wireless link. An illustration of the considered communication setup for a small network of $K=3$ users is provided in Figure~\ref{fig:sys_architecture}.
The server has access to a library $\CF$ of $N \ge K$ files, where each file $W \in \CF$ has a size of $f$ bits. We assume that every user has a cache memory of size $Mf$ bits. As mentioned earlier, we use $t \coloneqq K\frac{Mf}{Nf}$ to denote the total cache size in the network normalized by the size of the library. In essence, $t$ --- which we will assume to be an integer --- indicates the redundancy with which the library can be stored across the network.

\begin{figure}[t]
    \centering
    \resizebox{0.70\columnwidth}{!}{%
    \begin{tikzpicture}
    \matrix[column sep=3.5cm,row sep=0cm,ampersand replacement=\&]
    {
        \& \node(UT1)[label=below:\tiny{User 1}]{\UT{1}}; \\
        \node(MBS)[label=below:\tiny{Server}]{\MBase{}}; \&    \node(UT2)[label=below:\tiny{User 2}]{\UT{2}}; \\
        \& \node(UT3)[label=below:\tiny{User 3}]{\UT{3}}; \\
    };
    \draw[shorten >=1mm,->] (MBS.east) to [zigzag,"\tiny{$\Bh_1$}"] (UT1.west);
    \draw[shorten >=1mm,->] (MBS.east) to [zigzag,"\tiny{$\Bh_2$}"] (UT2.west);
    \draw[shorten >=1mm,->] (MBS.east) to [zigzag,"\tiny{$\Bh_3$}"] (UT3.west);
    \end{tikzpicture}
    }
    \caption{Illustration of the communication setup for a network of $K=3$ users}
    \label{fig:sys_architecture}
\end{figure}
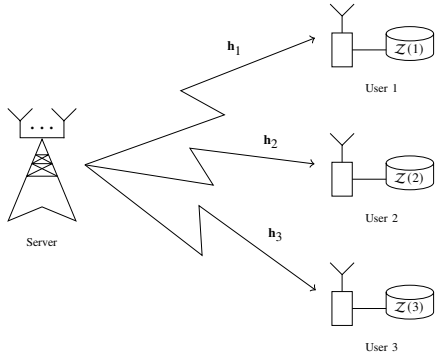

During the cache placement phase, the placement algorithm operates without any prior knowledge of future requests. We use $\CZ(k)$ to denote the cache contents of user $k \in [K]$ after the placement phase is completed.
At the beginning of the delivery phase, each user $k \in [K]$ reveals its requested file $W(k) \in \CF$ to the server. After receiving the demand set $\CD = \{W(k) \mid k\in [K] \}$, the server follows the delivery algorithm to transmit the requested subpackets to the users. This will involve the transmission of some $I$ transmission vectors $\{ \Bx_i \} \in \mathbb{C}^L$, $i \in [I]$, where $I$ is given by the delivery algorithm. These transmission vectors are transmitted in consecutive time intervals or separate frequency bins,\footnote{For comparison with other schemes, we will generally assume that transmissions here are made in consecutive time intervals.} using the array of $L$ antennas.
After $\Bx_i$ is transmitted, user $k$ receives
    $y_i(k) = \Bh_k^H \Bx_i + w_i (k)$,
where $\Bh_k \in \mathbb{C}^{L}$ denotes the channel vector and $w_i(k) \sim \mathbb{C}\mathcal{N}(0,N_0)$ denotes the observed noise at user $k$. Furthermore, we consider a slow-fading model in which the channel vectors remain constant during each time interval $i$, and we assume that full channel state information (CSI) is available at the server.\footnote{Improving CSI accuracy is a well-studied topic in the literature. In general, the designs are specific to the underlying uncertainty model, i.e., whether CSI error is bounded~\cite{vorobyov2003robust} or unbounded~\cite{Komulainen2013CoordinatedMode}. If the error is bounded, robust (worst-case) beamforming solutions can be computed that guarantee the achievability of the max-min SINR for all possible realizations of channel uncertainty. For unbounded error, statistically robust solutions can be provided if the estimation noise is assumed to follow a known distribution. In such a scenario, fixed-point iterations are used to provide a robust solution by adding the noise contribution from CSI uncertainty to the thermal noise.}

Let $\CX_i \subseteq [K]$ denote the set of users targeted by $\Bx_i$, and let $T_i$ denote the duration of time interval $i$ required so that every user in $\CX_i$ decodes its intended data from $\Bx_i$. If we consider $L_i$ to be the length of the codeword transmitted at time slot $i$, and if we define $R_i$ to be the multicast rate at which the server transmits a common message to all users in $\CX_i$, then $T_i$ is simply the ratio between $L_i$ and $R_i$.
We will use the metric of the \textit{symmetric rate}, which describes the total number of bits per second with which each user is served. Notably, we will consider the worst-case metric, corresponding to the symmetric rate at which the system can serve all users in the network irrespective of the demand set $\CD$. Given that the delivery phase has an overall duration of $\sum_{i=1}^I T_i$, and given that there are $K$ users, the symmetric rate can be computed as
\begin{equation}
\label{eq:sym_rate}
    R_{sym}=\frac{Kf}{\sum_{i=1}^I T_i}=\frac{Kf}{\sum_{i=1}^I \frac{L_i}{R_i}}.
\end{equation}
Our aim is to design a placement and delivery scheme that maximizes $R_{sym}$. 

\subsection{Building the Transmission Vectors}
We use linear precoding to build the transmission vectors. A generic transmission vector $\Bx_i$ is built as
\begin{equation}
\label{eq:general_transmission_vector_model}
    \Bx_i=\sum_{k\in \CX_i} \Bw_{i}(k) X_{i}(k) \; ,
\end{equation}
where $X_{i}(k)$ is the data codeword transmitted to user $k$, and $\Bw_{i}(k) \in \mathbb{C}^L$ is the beamforming vector used for $X_i(k)$. The beamforming vectors $\Bw_{i}(k)$ are here designed to maximize the worst-user rate, or equivalently, the worst user's SINR (signal to interference and noise ratio), as $\Bx_i$ is transmitted. Thus, given the transmission model in \eqref{eq:general_transmission_vector_model}, the multicast rate at time interval $i$ is calculated as
\begin{equation}
  R_i=\log(1+\gamma_i^*) \; ,  
\end{equation}
in which $\gamma_i^*$ is defined as
\begin{equation}
    \begin{aligned}
        \gamma_i^* = &\underset{\Bw_i(k)}{\max} \; \underset{k\in \CX_i}{\min} \; \textrm{SINR}_k \quad s.t. \quad \sum_{k\in \CX_i}||\Bw_{i}(k)||^2\leq P_T \; ,
    \end{aligned}
\end{equation}
where $\textrm{SINR}_k$ is the received SINR at user $k$ and $P_T$ is the available transmission power. We discuss this optimization problem in more details in the next section. 


As suggested before, we will consider that each transmission vector serves $|\CX_i|=t+\alpha$ users, where $\alpha\leq L$ is the multiplexing gain and is treated as a parameter of choice that can be tuned to obtain a better rate performance at finite-SNR. This $\alpha$ represents the number of independent streams in each transmission, and hence, reducing it implies sacrificing some spatial multiplexing gain for the purpose of increasing the beamforming gain, which can help reduce the worst-user effect in the finite-SNR regime. As we discuss later on, this same $\alpha$ can also be calibrated to control subpacketization and beamformer design complexity.

\section{Cyclic Caching for Reduced Subpacketization}
\label{section:ICCPaper}
In this section, we present our low-complexity high performance cyclic caching scheme, which can be applied to any MISO setup in which $\alpha \ge t$.\footnote{For setups with $t > \alpha$, one can use the coded caching scheme presented in~\cite{lampiris2018adding} for reduced subpacketization.} 
The following theorem summarizes the DoF and subpacketization performance of the scheme:
\begin{thm}
\label{thm:main_contribution}
For the large MISO broadcast setup with $t \le \alpha \le L$, the sum DoF of $t+\alpha$ is achievable with a subpacketization 
\begin{equation}
\label{eq:min_subpack}
    \frac{K(t+\alpha)}{\big(gcd(K,t,\alpha)\big)^2} \; .
\end{equation}
\end{thm}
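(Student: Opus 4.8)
The plan is to give an explicit cyclic placement-and-delivery construction and verify it achieves sum-DoF $t+\alpha$, proceeding in the two stages the surrounding text announces: first a base scheme of subpacketization $K(t+\alpha)$, then a symmetry-based folding that divides this by $\phi^2$ where $\phi \coloneqq \gcd(K,t,\alpha)$. For the base scheme I would split every file into $K(t+\alpha)$ subpackets and give each subpacket a cyclic label so that it lands in exactly $t$ of the $K$ caches, the caching users forming a fixed cyclic pattern (e.g.\ a length-$t$ run of consecutive indices modulo $K$). A first routine check is feasibility: since $\frac{M}{N}=\frac{t}{K}$, each user must store $\frac{t}{K}\cdot K(t+\alpha)=t(t+\alpha)$ subpackets, and a uniform cyclic rule in which every subpacket occupies exactly $t$ caches makes the total $(\text{subpacket},\text{user})$ incidence count equal $K\cdot t(t+\alpha)$, i.e.\ exactly $t$ caches per subpacket, so the memory constraint holds with equality.

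For delivery I would sweep a window of $t+\alpha$ consecutive users $\CX_s=\{s,s+1,\dots,s+t+\alpha-1\}$ (indices mod $K$) across all $K$ cyclic positions, transmitting in each step a beamformed sum $\Bx_i=\sum_{k\in\CX_s}\Bw_i(k)X_i(k)$ carrying one stream per window user. The decodability argument is the core. For a target user $k\in\CX_s$ receiving $X_i(k)$, I must show that among the $t+\alpha-1$ competing streams exactly $t$ carry subpackets $k$ has cached---so $k$ subtracts them using the known scalars $\Bh_k^H\Bw_i(k')$---while the remaining $\alpha-1$ are nulled by choosing each interfering beamformer orthogonal to $\Bh_k$. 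Because each subpacket lives in exactly $t$ caches, this reduces to the combinatorial claim that the cyclic labels can be coordinated with the window so that, for every stream sent to $k$, its $t$ caching users all lie inside $\CX_s$; the $\alpha-1$ orthogonality constraints on each $\Bw_i(k)$ are then simultaneously satisfiable because $L\ge\alpha$ leaves the nulling subspace nonempty. Counting then gives $K(K-t)$ transmissions, each delivering $t+\alpha$ fresh subpackets in one subpacket-slot, so the normalized delivery time yields sum-DoF exactly $t+\alpha$.

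The reduction stage exploits the extra periodicity present when $\phi>1$: since $\phi$ divides each of $K$, $t$, and $\alpha$, the cyclic construction is invariant under a shift of order $\phi$, so subpackets---and correspondingly transmissions---that are mapped to one another by this shift play identical roles and can be merged. Folding the index set by $\phi$ in the user coordinate and by $\phi$ in the within-window coordinate removes a factor $\phi$ from each, giving subpacketization $\frac{K}{\phi}\cdot\frac{t+\alpha}{\phi}=\frac{K(t+\alpha)}{\phi^2}$, with both factors integral since $\phi\mid K$ and $\phi\mid t+\alpha$, which is exactly \eqref{eq:min_subpack}. I expect the main obstacle to be the combinatorial bookkeeping in the decodability step---exhibiting one cyclic labeling for which the $t$-cached / $(\alpha-1)$-zero-forced split holds \emph{simultaneously} in every one of the $K(K-t)$ transmissions, and in particular handling the wrap-around of the window modulo $K$---together with verifying that the symmetry-based merge does not destroy this split. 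By contrast, the beamforming feasibility ($L\ge\alpha$) and the DoF count should be comparatively routine once the labeling is fixed.
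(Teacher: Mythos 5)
Your placement and the memory-feasibility count match the paper's, but the delivery construction has a structural flaw that goes beyond ``combinatorial bookkeeping.'' If the served set is always a window $\CX_s=\{s,\dots,s+t+\alpha-1\}$ of consecutive users and each packet is cached by a run of $t$ consecutive users, then your own decodability condition --- that the $t$ caching users of every transmitted subpacket lie inside $\CX_s$ --- forces the window to contain both the target user $k$ and the entire caching run of the packet it is to receive. For a pair $(p,k)$ with $k$ cyclically more than $\alpha$ positions away from the run $\{p,\dots,p+t-1\}$, no window of length $t+\alpha$ contains both (take $K=100$, $t=\alpha=2$: user $50$ can never receive packet $1$, cached at users $1$ and $2$, from any window of four consecutive users). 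So a large fraction of the required $(p,k)$ deliveries are unreachable, not merely unscheduled; there is also an unresolved mismatch between your $K$ window positions and the $K(K-t)$ transmissions you count. The paper avoids this by making the served set $\Bk_j^r$ \emph{non}-consecutive: a fixed $t$-run (which caches the single packet sent to the other $\alpha$ users) together with a sliding $\alpha$-window drawn from the remaining $K-t$ users (which caches the packets sent to the $t$-run users). Its correctness argument then counts, for each uncached pair $(p,k)$, exactly one round in which $k$ sits in the $\alpha$-part (receiving $\alpha$ subpackets of packet $p$) and exactly $t$ rounds in which it sits in the $t$-run (receiving one subpacket of $p$ each), totalling $t+\alpha$.

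For the $\phi^2$ reduction, your ``fold by the order-$\phi$ shift'' is in the right spirit but is not the paper's mechanism and, as stated, does not obviously work: merging transmissions related by the symmetry does not reduce the number of streams each individual user must still receive. The paper instead partitions the users into $K/\phi$ groups of size $\phi$ with identical caches, runs the base scheme on the virtual network with parameters $(K/\phi,\,t/\phi,\,\alpha/\phi)$, and elevates each virtual unicast stream into $\phi$ real unicast streams; the essential feasibility check you omit is that each elevated stream must then be suppressed at $(\alpha/\phi-1)\phi+(\phi-1)=\alpha-1$ real users, which is exactly what $\alpha\le L$ transmit dimensions permit.
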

\begin{proof}
The proof is found in this current section, where we present the designed cyclic caching scheme and show that it employs the above subpacketization to achieve the sum DoF of $t+\alpha$. 
\end{proof}

In what follows, we first introduce a cache placement algorithm in Section~\ref{section:lin_placement}, which is based on a well-defined placement matrix and requires each file to be split into $K(t+\alpha)$ smaller parts (subpackets). In Sections~\ref{section:lin_delivery} and~\ref{section:lin_decoding}, we explain the delivery phase, in which the missing data parts are delivered to the requesting users with $I=K(K-t)$ multicast transmission vectors, each serving $t+\alpha$ subpackets to $t+\alpha$ different users. In Section~\ref{section:graphical_rep}, using an example network, we show that this delivery algorithm follows a simple graphical representation that involves circular shifts of two vectors over a tabular structure. Overall, in Sections~\ref{section:lin_placement} to~\ref{section:graphical_rep}, we present a scheme that satisfies all the requests with multicast transmissions that always contain $t+\alpha$ subpackets, implying a DoF of $t+\alpha$ with subpacketization $K(t+\alpha)$. Finally, in Section~\ref{section:Emanuele_part} we show 
that by properly applying a user-grouping technique, subpacketization is further reduced by a factor of $\big(gcd(K,t,\alpha)\big)^2$, without any DoF loss. 

\vspace{-2pt}
\begin{rem}
\label{rem:optimal_dof}
When $\alpha=L$, the achieved DoF $t+L$ is exactly optimal under the assumption of one-shot linear schemes and uncoded placement (cf.~\cite{lampiris2018resolving}).
\end{rem}


We note that, for fixed $t$ and $L$, the above integer subpacketization scales linearly with $K$. This allows applying coded caching in larger networks, and entails the benefit of a reduced number of necessary transmissions which in turn implies a reduced number of beamformer design problems that need to be solved. 
As a quick comparison, if $K=20,t=4,L=\alpha = 8$, the 
proposed scheme requires subpacketization of 15, while the schemes in~\cite{mohajer2020miso} and~\cite{shariatpanahi2018physical} respectively require (approximately) $5 \times 10^3$ and $3 \times 10^7$ subpackets.
More comparisons are provided in Section~\ref{section:sim_results}.



\subsection{Cache Placement}
\label{section:lin_placement}
For cache placement, we use a $K \times K$ binary placement matrix $\BV$ where the first row has $t$ consecutive 1's (other elements are zero) and each subsequent row is a circular shift of the previous row by one column. Given $\BV$, we split each file $W$ into $K$ packets $W_p$, $p \in [K]$, and each packet $W_p$ into $t+\alpha$ smaller subpackets $W_p^q$. Then for every $p,k\in [K]$, if $\BV[p,k] = 1$, $W_p^q$ is stored in the cache memory of user $k$, $\forall W \in \CF, q \in [t+\alpha]$.
\footnote{The placement matrix $\BV$ used in this paper is a special case of valid placement matrices introduced in~\cite{salehi2019subpacketization}.}

\begin{exmp}
\label{exmp:placement_matrix}
For a scenario of $K=6$, $t=2$, $\alpha=3$, $\BV$ is built as
\begin{equation}\label{eq:V_example}
\BV = 
    \renewcommand\arraystretch{0.6} \begin{bmatrix}
    1 & 1 & 0 & 0 & 0 & 0 \\
    0 & 1 & 1 & 0 & 0 & 0 \\
    0 & 0 & 1 & 1 & 0 & 0 \\
    0 & 0 & 0 & 1 & 1 & 0 \\
    0 & 0 & 0 & 0 & 1 & 1 \\
    1 & 0 & 0 & 0 & 0& 1
    \end{bmatrix} \; ,
\end{equation}
and the resulting subpacketization is $K \times (t+\alpha) = 6 \times (2+3) = 30$.
For example, the cache contents of users 1 and 2 can be found from~\eqref{eq:V_example} as
\begin{equation*}
    \begin{aligned}
    \CZ(1) = &\{ W_1^q, W_6^q \; ; \; \forall \, W \in \CF , \, q\in[5] \} \; , \\
    \CZ(2) = &\{ W_1^q, W_2^q \; ; \; \forall \, W \in \CF , \, q\in[5] \} \; . \\
    \end{aligned}
\end{equation*}
The cache contents of users $3-6$ can be written accordingly.
\end{exmp}

\subsection{Content Delivery}
\label{section:lin_delivery}
In cyclic caching, the content delivery phase consists of $K$ rounds, where in each round we build $K-t$ transmission vectors. Thus, the content delivery is completed after \rev{$I=K(K-t)$} transmissions. We use $\Bx^r_j$ to denote the transmission vector $j \in [K-t]$ at transmission round $r \in [K]$.\footnote{In the general transmission vector model~\eqref{eq:general_transmission_vector_model}, $\Bx_j^r$ corresponds to $\Bx_i$, $i=(r-1)(K-t)+j$.} By transmitting $\Bx^r_j$, useful data packets are delivered to a set of $t+\alpha$ users. We define the user index vector $\Bk_j^r$ to denote the set of users being targeted by $\Bx_j^r$, and the packet index vector $\Bp_j^r$ to contain the packet indices targeted for the users in $\Bk_j^r$. In other words, using $\Bx_j^r$, we transmit (part of) the packet $W_{\Bp_j^r[n]}(\Bk_j^r[n])$ to each user $\Bk_j^r[n]$,  $n=1, \ldots, t+\alpha$. Both $\Bk_j^r$ and $\Bp_j^r$ vectors are built recursively. Let us use $\%$ sign to denote the mod operator with an offset of one. It is defined as
\begin{equation}
    a \% b = ((a-1) \; \textrm{mod} \; b) + 1 \; ,
\end{equation}
such that $a \% a = a$ and $(a+b) \% a = b \% a$. Then, $\Bk_j^1$ and $\Bp_j^1$ are built as 
\begin{equation}
\label{eq:p1_k1_creation}
\begin{aligned}
    \Bk_j^1 &= \Big[ \; [1:t] \; ; \; \big( ([1:\alpha]+j-1)\%(K-t) \big) +t \; \Big] \; , \\
    \Bp_j^1 &= \Big[ \big( (t+j-[1:t])\%(K-t) \big) + [1:t] \; ; \; \Be(\alpha) \Big] \; , \\
\end{aligned}
\end{equation}
where $\Be(m)$ is a vector of 1's with size $m$ (e.g., $\Be(3) = [1\;1\;1]$). For the next transmission rounds, i.e., $1 < r \le K$, we simply build $\Bk_j^r$ and $\Bp_j^r$, using $\Bk_j^1$ and $\Bp_j^1$, as
\begin{equation}
\label{eq:k_p_higher_rounds}
\begin{aligned}
    \Bk_j^r &= \big( \Bk_j^1 + r \big) \% K \; , \quad
    \Bp_j^r = \big( \Bp_j^1 + r \big) \% K \; .\\
\end{aligned}
\end{equation}
To gain a better insight into how $\Bk_j^r$ and $\Bp_j^r$ are built, in Section~\ref{section:graphical_rep}, we offer a simple graphical representation, which is based on circular shift operations over a tabular structure. In the following, we provide $\Bk_j^r$ and $\Bp_j^r$ vectors for the small network scenario given in Example~\ref{exmp:placement_matrix}.

\begin{exmp}
\label{exmp:user_packet_index_vectors}
In the scenario of Example~\ref{exmp:placement_matrix}, content delivery consists of six rounds, where at each round four transmission vectors are built. The user and packet index vectors for the first and second rounds are given as
\begin{equation}
\label{eq:example_first_round}
    \begin{aligned}
        \Bk_1^1 &= [1\;2\;3\;4\;5] \; , \quad \Bp_1^1 = [3\;3\;1\;1\;1] \; , \\
        \Bk_2^1 &= [1\;2\;4\;5\;6] \; , \quad \Bp_2^1 = [4\;4\;1\;1\;1] \; , \\
        \Bk_3^1 &= [1\;2\;5\;6\;3] \; , \quad \Bp_3^1 = [5\;5\;1\;1\;1] \; , \\
        \Bk_4^1 &= [1\;2\;6\;3\;4] \; , \quad \Bp_4^1 = [2\;6\;1\;1\;1] \; ,\\
    \end{aligned}
\end{equation}
and 
\begin{equation}
\label{eq:example_second_round}
    \begin{aligned}
        \Bk_1^2 &= [2\;3\;4\;5\;6] \; , \quad \Bp_1^2 = [4\;4\;2\;2\;2] \; , \\
        \Bk_2^2 &= [2\;3\;5\;6\;1] \; , \quad \Bp_2^2 = [5\;5\;2\;2\;2] \; , \\
        \Bk_3^2 &= [2\;3\;6\;1\;4] \; , \quad \Bp_3^2 = [6\;6\;2\;2\;2] \; , \\
        \Bk_4^2 &= [2\;3\;1\;4\;5] \; , \quad \Bp_4^2 = [3\;1\;2\;2\;2] \; , \\
    \end{aligned}
\end{equation}
respectively. The user and packet index vectors for the other rounds are built similarly.
\end{exmp}

Two other variables are needed to build the transmission vector $\Bx_j^r$. First, we introduce the subpacket index $q(W,p)$, where $W \in \CF$ denotes a general file and $p \in [K]$ is the packet index. The subpacket index $q(W,p)$ indicates which subpacket of $W_p$ should be transmitted, the next time it is included in a transmission vector. For every $W \in \CF$ and $p \in [K]$, $q(W,p)$ is initialized to one, and incremented every time $W_p$ is included in a transmission vector. For notational simplicity, here we use
\begin{equation}
\label{eq:q_equivalency}
    q_j^r(n) \coloneqq q(W(\Bk_j^r[n]),\Bp_j^r[n]) \; .
\end{equation}
Second, we define the \textit{interference indicator} set $\CR_j^r(n)$ as the set of users at which $W_{\Bp_j^r[n]}^{q_j^r(n)}(\Bk_j^r[n])$ should be suppressed by beamforming.\footnote{If zero-forcing beamformers are used, $\CR_j^r(n)$ denotes the set of users at which $W_{\Bp_j^r[n]}^{q_j^r(n)}(\Bk_j^r[n])$ should be nulled-out by beamforming.} $\CR_j^r(n)$ has $\alpha-1$ elements and is built as
\begin{equation}
\label{eq:zero_force_set_definition}
    \CR_j^r(n) = \Big\{k \in \Bk_j^r \backslash \Bk_j^r[n] \;\; \mid \;\; \BV\big[\Bp_j^r[n],k\big] = 0 \Big\} \; .
\end{equation}
Finally, the transmission vectors are built as:\footnote{The general transmission vector model in~\eqref{eq:general_transmission_vector_model} is equivalent to~\eqref{eq:generalTXvector} via the following index mapping:
\begin{equation*}
\begin{aligned}
    k \rightarrow \Bk_j^r[n] \; , \qquad &\qquad \qquad \; \; \Bw_i \rightarrow \Bw_{\CR_j^r(n)} \; , \\
    \CX_i \rightarrow \bigcup_{n \in [t+\alpha]} \{ \Bk_j^r[n]\} \; , &\qquad X_i(k) \rightarrow W_{\Bp_j^r[n]}^{q_j^r(n)}(\Bk_j^r[n]) \; .
\end{aligned}
\end{equation*}
}
\begin{equation}\label{eq:generalTXvector}
    \Bx_j^r = \sum_{n=1}^{t+\alpha} \Bw_{\CR_j^r(n)} W_{\Bp_j^r[n]}^{q_j^r(n)}(\Bk_j^r[n]) \; .
\end{equation}

\begin{exmp}
\label{exmp:R_sets}
For the network considered in Examples~\ref{exmp:placement_matrix} and~\ref{exmp:user_packet_index_vectors}, the interference indicator sets for the first transmission round are built as
\begin{equation}\label{eq:example3}
    \begin{aligned}
        \CR_1^1(1) = \{ 2,5 \} ,& \quad
        \CR_1^1(2) = \{ 1,5 \} , \quad
        \CR_1^1(3) = \{ 4,5 \} , \\ 
        \CR_1^1(4) &= \{ 3,5 \} , \quad
        \CR_1^1(5) = \{ 4,4 \} . \\ 
    \end{aligned}
\end{equation}
\end{exmp}

\subsection{Decoding at the Receiver}
\label{section:lin_decoding}
During time interval $i$, every user $k \in \CX_i$ receives
\begin{equation}
\begin{aligned}
    y_i(k) = &\Bh_{k}^H \Bw_{i}(k) X_{i}(k) \\
        &+ \sum_{\hat{k} \in \CX_i \backslash \{k\}} \Bh_{k}^H \Bw_{i}(\hat{k}) X_{i}(\hat{k}) + w_i (k) \; ,
\end{aligned}
\end{equation}
where the first term is the intended codeword and the latter two terms indicate the interference and noise, respectively. Assume $\hat{k} \coloneqq \Bk_{\hat{j}}^{\hat{r}}[\hat{n}]$, for some $\hat{j},\hat{r},\hat{n}$. Defining $p(\hat{k}) \coloneqq \Bp_{\hat{j}}^{\hat{r}}[\hat{n}]$, for every element in the interference term only one of the following options is possible:
\begin{enumerate}
    \item $\BV[p(\hat{k}),k] = 1$ indicates $X_i(\hat{k})$ is in the cache memory of user $k$, and hence, $\Bh_{k}^H \Bw_{i}(\hat{k}) X_{i}(\hat{k})$ can be reconstructed and removed  from $y_i(k)$;
    \item $\BV[p(\hat{k}),k] = 0$ indicates that $\hat{k}$ is in the interference indicator set associated with $X_i(k)$ as defined in~\eqref{eq:zero_force_set_definition}, and hence, $X_i(\hat{k})$ is suppressed at user $k$ by transmit beamforming.
\end{enumerate}
In both cases, the interference due to $X_i(\hat{k})$ can be controlled and/or completely removed at user $k$. Since $|\CX_i| = t+ \alpha$, the proposed scheme allows for serving $t+\alpha$ users in parallel during each transmission interval. The following example clarifies the decoding procedure for a single transmission in a small network. 
A more detailed explanation is provided in Appendix~\ref{section:ICCvalidity}.

\begin{exmp}
\label{exmp:decoding_x_1}
Consider the network in Example~\ref{exmp:placement_matrix}, for which the user and packet index vectors are provided in Example~\ref{exmp:user_packet_index_vectors} and the interference indicator sets are presented in Example~\ref{exmp:R_sets}. Let us assume the demand set is $\CD = \{A,B,C,D,E,F \}$. Then, following~\eqref{eq:generalTXvector}, the first transmission vector in the first round is built as
\begin{equation}
\label{eq:the_first_ever_trans_vector}
\begin{aligned}
    \Bx_1^1 = & \ \Bw_{2,5} A_3^1 + \Bw_{1,5} B_3^1 
    + \Bw_{4,5} C_1^1  + \Bw_{3,5} D_1^1 + \Bw_{3,4} E_1^1 \; ,
\end{aligned}
\end{equation}
where the brackets of the interference indicator sets are dropped for notation simplicity. 
After $\Bx_1^1$ is transmitted, user~1 receives
\begin{equation}
\label{eq:first_transmission}
    \begin{aligned}
        y_1^1(1) = & \ \Bh_1^H \Bw_{2,5} A_3^1 + \underline{ \underline{ \Bh_1^H \Bw_{1,5} B_3^1 } }  +  \underline{ \Bh_1^H \Bw_{4,5} C_1^1}  \\
        &+ \underline{\Bh_1^H \Bw_{3,5} D_1^1 } + \underline{\Bh_1^H \Bw_{3,4} E_1^1 } + w^1_1(1) \; ,
    \end{aligned}
\end{equation}
where the single- and double-underlined terms indicate the interference. From Example~\ref{exmp:placement_matrix}, $C_1^1$, $D_1^1$, and $E_1^1$ are available in the cache memory of user~1, and hence, all the single-underlined terms can be reconstructed and removed from the received signal. On the other hand, following the definition of the interference indicator sets, the double-underlined term (containing $B_3^1$) is also suppressed at user~1 with the help of the beamforming vectors. As a result, user~1 can decode $A_3^1$ with controlled interference. Similarly, users $2 - 5$ can decode $B_3^1$, $C_1^1$, $D_1^1$ and $E_1^1$, respectively. In Table~\ref{tab:deconding_x_1}, we have summarized how different users decode $\Bx_1^1$ and extract their requested data.
\begin{table}[t]
    \centering
    \resizebox{0.97\columnwidth}{!}{%
    \begin{tabular}{|c|c|c|c|c|}
         \multicolumn{5}{c}{Transmission vector: $\Bx_1^1 = \Bw_{2,5} A_3^1 + \Bw_{1,5} B_3^1 + \Bw_{4,5} C_1^1  + \Bw_{3,5} D_1^1 + \Bw_{3,4} E_1^1$} \\
        \hline
        \hline
         User & \begin{tabular}{@{}c@{}}Available \\in cache\end{tabular}  & \begin{tabular}{@{}c@{}}Supp. by \\ beamformer\end{tabular}  & \begin{tabular}{@{}c@{}}Useful \\ data\end{tabular}  & SINR \\
         \hline
         1 & $C_1^1, D_1^1, E_1^1$ & $B_3^1$ & $A_3^1$ & $\frac{|\Bh_1^H \Bw_{2,5}|^2}{|\Bh_1^H \Bw_{1,5}|^2 + N_0}$ \\
         \hline
         2 & $C_1^1, D_1^1, E_1^1$ & $A_3^1$ & $B_3^1$ & $\frac{|\Bh_12^H \Bw_{1,5}|^2}{|\Bh_2^H \Bw_{2,5}|^2 + N_0}$ \\
         \hline
         3 & $A_3^1,B_3^1$ & $D_1^1,E_1^1$ & $C_1^1$ & $\frac{|\Bh_3^H \Bw_{4,5}|^2}{|\Bh_3^H \Bw_{3,5}|^2 + |\Bh_3^H \Bw_{3,4}|^2 + N_0}$ \\
         \hline
         4 & $A_3^1,B_3^1$ & $C_1^1,E_1^1$ & $D_1^1$ & $\frac{|\Bh_4^H \Bw_{3,5}|^2}{|\Bh_4^H \Bw_{4,5}|^2 + |\Bh_4^H \Bw_{3,4}|^2 + N_0}$ \\
         \hline
         5 & $A_3^1,B_3^1$ & $C_1^1,D_1^1$ & $E_1^1$ & $\frac{|\Bh_5^H \Bw_{3,4}|^2}{|\Bh_5^H \Bw_{4,5}|^2 + |\Bh_5^H \Bw_{3,5}|^2 + N_0}$ \\
         \hline
         6 & $-$ & $-$ & $-$ & $-$ \\
         \hline
    \end{tabular}%
    }
    \caption{Decoding process for $\Bx_1^1$ at different network users, for Example~\ref{exmp:decoding_x_1}.}
    \label{tab:deconding_x_1}
\end{table}
\end{exmp}

\subsection{A Graphical Example}
\label{section:graphical_rep}
For further clarification, we describe the operation of the cyclic caching scheme for the network setup in Example~\ref{exmp:placement_matrix}, using a graphical representation of the placement matrix $\BV$ in Figure~\ref{fig:graph_exmp}. In this figure, each column represents a user, and each row denotes a packet index. For example, the first column represents user one, and the first row stands for the first packet of all files, i.e., $W_1^q$, $\forall W \in \CF, q \in [t+\alpha]$. Lightly shaded entries indicate packets that are cached at the user. For example, $W_1^q,W_6^q$ are stored at user~1, $\forall W \in \CF, q \in [t+\alpha]$. 

\begin{figure}[ht]
    \centering
    \begin{subfigure}{.44\textwidth}
        \centering
        \begin{tikzpicture}[scale = 0.5]
            \begin{scope}<+->;
            \draw[step=1cm,very thin,black!60] (0,0) grid (6,6);
            \end{scope}
            \begin{scope}
            \FillBlack{1}{1};
            \FillBlack{1}{2};
            \FillBlack{2}{2};
            \FillBlack{2}{3};
            \FillBlack{3}{3};
            \FillBlack{3}{4};
            \FillBlack{4}{4};
            \FillBlack{4}{5};
            \FillBlack{5}{5};
            \FillBlack{5}{6};
            \FillBlack{6}{6};
            \FillBlack{6}{1};
            \end{scope}
        \end{tikzpicture}
    \end{subfigure}
    \caption{Graphical illustration for Example 1.}
    \label{fig:graph_exmp}
\end{figure}
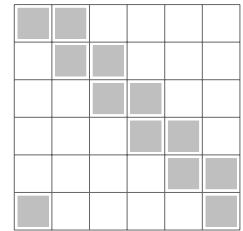

In the subsequent figures, we use darkly shaded entries to indicate which packet indices of the requested files are sent during each transmission. The column and row indices of these darkly shaded entries are extracted from the user and packet index vectors. For our example network, the user and packet index vectors for the first and second transmission rounds are provided in~\eqref{eq:example_first_round} and~\eqref{eq:example_second_round}, and their graphical representations are depicted in Figures~\ref{fig:graph_exmp_r1c1} and~\ref{fig:graph_exmp_r2c2}. 
For example, Fig.~\ref{fig:sub1} corresponds to the first transmission of the first round, where users $\Bk_1^1=[1,2,3,4,5]$ receive subpackets of packets indicated by $\Bp_1^1=[3, 3, 1, 1, 1]$. 
For simplicity, let us assume the demand set is $\CD = \{A,B,C,D,E,F \}$. Then, users 1-5 receive $A_3^1$, $B_3^1$, $C_1^1$, $D_1^1$ and $E_1^1$, respectively.

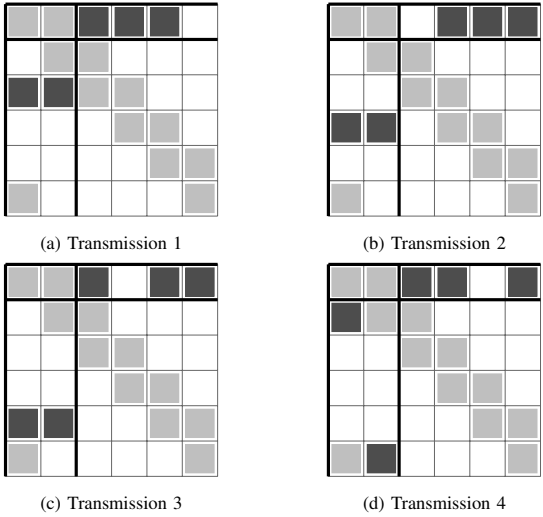
\begin{figure}[t]
    \centering
    \begin{subfigure}{.23\textwidth}
        \centering
        \begin{tikzpicture}[scale = 0.47]
            \begin{scope}<+->;
            \draw[step=1cm,very thin,black!60] (0,0) grid (6,6);
            \draw[very thick](0,0)to(0,6);
            \draw[very thick](2,0)to(2,6);
            \draw[very thick](0,5)to(6,5);
            \draw[very thick](0,6)to(6,6);
            \end{scope}
            \begin{scope}
            \FillBlack{1}{1};
            \FillBlack{1}{2};
            \FillBlack{2}{2};
            \FillBlack{2}{3};
            \FillBlack{3}{3};
            \FillBlack{3}{4};
            \FillBlack{4}{4};
            \FillBlack{4}{5};
            \FillBlack{5}{5};
            \FillBlack{5}{6};
            \FillBlack{6}{6};
            \FillBlack{6}{1};
            \FillGray{1}{3};
            \FillGray{1}{4};
            \FillGray{1}{5};
            \FillGray{3}{1};
            \FillGray{3}{2};
            \end{scope}
        \end{tikzpicture}
        \caption{Transmission 1}
        \label{fig:sub1}
    \end{subfigure}
    \begin{subfigure}{.23\textwidth}
        \centering
        \begin{tikzpicture}[scale = 0.47]
            \begin{scope}<+->;
            \draw[step=1cm,very thin,black!60] (0,0) grid (6,6);
            \draw[very thick](0,0)to(0,6);
            \draw[very thick](2,0)to(2,6);
            \draw[very thick](0,5)to(6,5);
            \draw[very thick](0,6)to(6,6);
            \end{scope}
            \begin{scope}
            \FillBlack{1}{1};
            \FillBlack{1}{2};
            \FillBlack{2}{2};
            \FillBlack{2}{3};
            \FillBlack{3}{3};
            \FillBlack{3}{4};
            \FillBlack{4}{4};
            \FillBlack{4}{5};
            \FillBlack{5}{5};
            \FillBlack{5}{6};
            \FillBlack{6}{6};
            \FillBlack{6}{1};
            \FillGray{1}{4};
            \FillGray{1}{5};
            \FillGray{1}{6};
            \FillGray{4}{1};
            \FillGray{4}{2};
            \end{scope}
        \end{tikzpicture}
        \caption{Transmission 2}
        \label{fig:sub2}
    \end{subfigure}
    \begin{subfigure}{.23\textwidth}
        \centering
        \begin{tikzpicture}[scale = 0.47]
            \begin{scope}<+->;
            \draw[step=1cm,very thin,black!60] (0,0) grid (6,6);
            \draw[very thick](0,0)to(0,6);
            \draw[very thick](2,0)to(2,6);
            \draw[very thick](0,5)to(6,5);
            \draw[very thick](0,6)to(6,6);
            \end{scope}
            \begin{scope}
            \FillBlack{1}{1};
            \FillBlack{1}{2};
            \FillBlack{2}{2};
            \FillBlack{2}{3};
            \FillBlack{3}{3};
            \FillBlack{3}{4};
            \FillBlack{4}{4};
            \FillBlack{4}{5};
            \FillBlack{5}{5};
            \FillBlack{5}{6};
            \FillBlack{6}{6};
            \FillBlack{6}{1};
            \FillGray{1}{5};
            \FillGray{1}{6};
            \FillGray{1}{3};
            \FillGray{5}{1};
            \FillGray{5}{2};
            \end{scope}
        \end{tikzpicture}
        \caption{Transmission 3}
        \label{fig:sub3}
    \end{subfigure}
    \begin{subfigure}{.23\textwidth}
        \centering
        \begin{tikzpicture}[scale = 0.47]
            \begin{scope}<+->;
            \draw[step=1cm,very thin,black!60] (0,0) grid (6,6);
            \draw[very thick](0,0)to(0,6);
            \draw[very thick](2,0)to(2,6);
            \draw[very thick](0,5)to(6,5);
            \draw[very thick](0,6)to(6,6);
            \end{scope}
            \begin{scope}
            \FillBlack{1}{1};
            \FillBlack{1}{2};
            \FillBlack{2}{2};
            \FillBlack{2}{3};
            \FillBlack{3}{3};
            \FillBlack{3}{4};
            \FillBlack{4}{4};
            \FillBlack{4}{5};
            \FillBlack{5}{5};
            \FillBlack{5}{6};
            \FillBlack{6}{6};
            \FillBlack{6}{1};
            \FillGray{1}{6};
            \FillGray{1}{3};
            \FillGray{1}{4};
            \FillGray{2}{1};
            \FillGray{6}{2};
            \end{scope}
        \end{tikzpicture}
        \caption{Transmission 4}
        \label{fig:sub4}
    \end{subfigure}
    \caption{Graphical illustration of the first round $r=1$.}
    \label{fig:graph_exmp_r1c1}
\end{figure}
\begin{figure}[t]
    \centering
    \begin{subfigure}{.22\textwidth}
        \centering
        \begin{tikzpicture}[scale = 0.47]
            \begin{scope}<+->;
            \draw[step=1cm,very thin,black!60] (0,0) grid (6,6);
            \draw[very thick](1,0)to(1,4)to(3,4)to(3,0);
            \draw[very thick](1,6)to(1,5)to(3,5)to(3,6);
            \draw[very thick](0,4)to(1,4)to(1,5)to(0,5);
            \draw[very thick](6,4)to(3,4)to(3,5)to(6,5);
            \end{scope}
            \begin{scope}
            \FillBlack{1}{1};
            \FillBlack{1}{2};
            \FillBlack{2}{2};
            \FillBlack{2}{3};
            \FillBlack{3}{3};
            \FillBlack{3}{4};
            \FillBlack{4}{4};
            \FillBlack{4}{5};
            \FillBlack{5}{5};
            \FillBlack{5}{6};
            \FillBlack{6}{6};
            \FillBlack{6}{1};
            \FillGray{2}{4};
            \FillGray{2}{5};
            \FillGray{2}{6};
            \FillGray{4}{2};
            \FillGray{4}{3};
            \end{scope}
        \end{tikzpicture}
        \caption{Transmission 1}
        \label{fig:sub12}
    \end{subfigure}
    \begin{subfigure}{.22\textwidth}
        \centering
        \begin{tikzpicture}[scale = 0.47]
            \begin{scope}<+->;
            \draw[step=1cm,very thin,black!60] (0,0) grid (6,6);
            \draw[very thick](1,0)to(1,4)to(3,4)to(3,0);
            \draw[very thick](1,6)to(1,5)to(3,5)to(3,6);
            \draw[very thick](0,4)to(1,4)to(1,5)to(0,5);
            \draw[very thick](6,4)to(3,4)to(3,5)to(6,5);
            \end{scope}
            \begin{scope}
            \FillBlack{1}{1};
            \FillBlack{1}{2};
            \FillBlack{2}{2};
            \FillBlack{2}{3};
            \FillBlack{3}{3};
            \FillBlack{3}{4};
            \FillBlack{4}{4};
            \FillBlack{4}{5};
            \FillBlack{5}{5};
            \FillBlack{5}{6};
            \FillBlack{6}{6};
            \FillBlack{6}{1};
            \FillGray{2}{5};
            \FillGray{2}{6};
            \FillGray{2}{1};
            \FillGray{5}{2};
            \FillGray{5}{3};
            \end{scope}
        \end{tikzpicture}
        \caption{Transmission 2}
        \label{fig:sub22}
    \end{subfigure}
    \begin{subfigure}{.22\textwidth}
        \centering
        \begin{tikzpicture}[scale = 0.47]
            \begin{scope}<+->;
            \draw[step=1cm,very thin,black!60] (0,0) grid (6,6);
            \draw[very thick](1,0)to(1,4)to(3,4)to(3,0);
            \draw[very thick](1,6)to(1,5)to(3,5)to(3,6);
            \draw[very thick](0,4)to(1,4)to(1,5)to(0,5);
            \draw[very thick](6,4)to(3,4)to(3,5)to(6,5);
            \end{scope}
            \begin{scope}
            \FillBlack{1}{1};
            \FillBlack{1}{2};
            \FillBlack{2}{2};
            \FillBlack{2}{3};
            \FillBlack{3}{3};
            \FillBlack{3}{4};
            \FillBlack{4}{4};
            \FillBlack{4}{5};
            \FillBlack{5}{5};
            \FillBlack{5}{6};
            \FillBlack{6}{6};
            \FillBlack{6}{1};
            \FillGray{2}{6};
            \FillGray{2}{1};
            \FillGray{2}{4};
            \FillGray{6}{2};
            \FillGray{6}{3};
            \end{scope}
        \end{tikzpicture}
        \caption{Transmission 3}
        \label{fig:sub32}
    \end{subfigure}
    \begin{subfigure}{.22\textwidth}
        \centering
        \begin{tikzpicture}[scale = 0.47]
            \begin{scope}<+->;
            \draw[step=1cm,very thin,black!60] (0,0) grid (6,6);
            \draw[very thick](1,0)to(1,4)to(3,4)to(3,0);
            \draw[very thick](1,6)to(1,5)to(3,5)to(3,6);
            \draw[very thick](0,4)to(1,4)to(1,5)to(0,5);
            \draw[very thick](6,4)to(3,4)to(3,5)to(6,5);
            \end{scope}
            \begin{scope}
            \FillBlack{1}{1};
            \FillBlack{1}{2};
            \FillBlack{2}{2};
            \FillBlack{2}{3};
            \FillBlack{3}{3};
            \FillBlack{3}{4};
            \FillBlack{4}{4};
            \FillBlack{4}{5};
            \FillBlack{5}{5};
            \FillBlack{5}{6};
            \FillBlack{6}{6};
            \FillBlack{6}{1};
            \FillGray{2}{1};
            \FillGray{2}{4};
            \FillGray{2}{5};
            \FillGray{3}{2};
            \FillGray{1}{3};
            \end{scope}
        \end{tikzpicture}
        \caption{Transmission 4}
        \label{fig:sub42}
    \end{subfigure}
    \caption{Graphical illustration of the second round $r=2$.}
    \label{fig:graph_exmp_r2c2}
\end{figure}
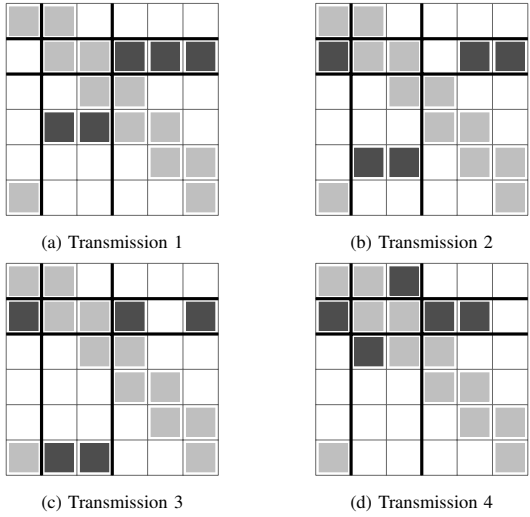

The transmission vectors can be easily reconstructed using the graphical representations. For example, Fig.~\ref{fig:graph_exmp_r1c1} implies that the following transmission vectors are generated in the first round: 
\begin{equation*}
\label{eq:tx1}
    \begin{aligned}
        \Bx_1^1 = & \Bw_{2,5} A_3^1 + \Bw_{1,5} B_3^1 + \Bw_{4,5} C_1^1  + \Bw_{3,5} D_1^1 + \Bw_{3,4} E_1^1 \; , \\
        \Bx_2^1 = & \Bw_{2,6} A_4^1 + \Bw_{1,6} B_4^1 + \Bw_{5,6} D_1^2  + \Bw_{4,6} E_1^2 + \Bw_{4,5} F_1^1 \; , \\
        \Bx_3^1 = & \Bw_{2,3} A_5^1 + \Bw_{1,3} B_5^1 + \Bw_{6,3} E_1^3  + \Bw_{5,3} F_1^2 + \Bw_{5,6} C_1^2 \; , \\
        \Bx_4^1 = & \Bw_{4,6} A_2^1 + \Bw_{3,4} B_6^1 + \Bw_{3,4} F_1^3  + \Bw_{6,4} C_1^3 + \Bw_{6,3} D_1^3 \; , \\
    \end{aligned}
\end{equation*}
where the brackets of the interference indicator sets are dropped for notation simplicity. Note that according to~\eqref{eq:q_equivalency}, we have 
    $q_j^1(n) \equiv q(W(\Bk_j^1[n]),\Bp_j^1[n])$,
and hence, $q_j^1(n)$ is incremented every time $W_{\Bp_j^1[n]}(\Bk_j^1[n])$ appears in a transmission vector. As a result, the subpacket index for the packet $C_1$ is incremented from one to three, as it has appeared in $\Bx_1^1$, $\Bx_3^1$ and $\Bx_4^1$, respectively.

Following the same procedure, using Figure~\ref{fig:graph_exmp_r2c2}, for the second round we have
\begin{equation*}\label{eq:tx2}
    \begin{aligned}
        \Bx_1^2 = & \Bw_{3,6} B_4^2 + \Bw_{2,6} C_4^1 + \Bw_{5,6} D_2^1  + \Bw_{4,6} E_2^1 + \Bw_{4,5} F_2^1 \; , \\
        \Bx_2^2 = & \Bw_{3,1} B_5^2 + \Bw_{2,1} C_5^1 + \Bw_{6,1} E_2^2  + \Bw_{5,1} F_2^2 + \Bw_{5,6} A_2^2 \; , \\
        \Bx_3^2 = & \Bw_{3,4} B_6^2 + \Bw_{2,4} C_6^1 + \Bw_{1,4} F_2^3  + \Bw_{6,4} A_2^3 + \Bw_{6,1} D_2^2 \; , \\
        \Bx_4^2 = & \Bw_{5,1} B_3^2 + \Bw_{4,5} C_1^4 + \Bw_{4,5} A_2^4  + \Bw_{1,5} D_2^3 + \Bw_{1,4} E_2^3 \; . \\
    \end{aligned}
\end{equation*}
From Figures~\ref{fig:graph_exmp_r1c1} and~\ref{fig:graph_exmp_r2c2}, it can be seen that the transmissions vectors $\Bx_2^r$, $\Bx_3^r$, and $\Bx_4^r$ in round $r$ are built by circular shifts of the first transmission vector $\Bx_1^r$ over the non-shaded cells of the tabular grid and in two perpendicular directions. Specifically, the first two terms in $\Bx_1^r$ are shifted vertically, while the other three are shifted horizontally. This procedure is highlighted in sub-figures using wider border lines. Moreover, comparing Figures~\ref{fig:graph_exmp_r1c1} and~\ref{fig:graph_exmp_r2c2}, it is clear that the vectors in the second transmission round are diagonally shifted versions of the vectors in the first round. This property is the intuition behind the cyclic caching name and results from the recursive procedure in~\eqref{eq:k_p_higher_rounds}, where the mod operator is used to build $\Bk_j^r$ and $\Bp_j^r$ vectors for $r > 1$.

\subsection{Beamformer Design}
\label{section:lin_beamformer}
As discussed earlier, we use optimized beamformers to build the transmission vectors. These beamformers result in a better rate compared with zero-forcing, especially in the low-SNR regime, as they allow balancing the detrimental impact of noise and the inter-stream interference~\cite{tolli2017multi}. However, optimized beamformers may require non-convex optimization problems to be solved (due to interference from unwanted terms), making the problem computationally intractable even for moderate $ K $ values. Interestingly, in addition to requiring much-reduced subpacketization, cyclic caching also manages to eliminate the requirement of multicasting, thus enabling optimized beamformers to be designed with much less computational complexity.

As $t+\alpha$ users are served simultaneously by each transmission vector, symmetric rate maximization is equivalent to maximizing the worst user rate (among served users), which, in turn, is equivalent to maximizing the worst user SINR. Naturally, the unwanted terms canceled out using the local cache contents are \textit{not} considered interference in optimized SINR expressions. 
The optimized beamformer vectors for the $j$-th transmission in round $r$ can be found by solving the optimization problem
\begin{equation}
    \label{eq:optimizedBeamOP}
    \begin{aligned}
        \max_{\Bw_{\CR_j^r(n)}} & \min_{n\in [t+\alpha]} 
        \frac{ | \Bh_{\Bk_j^r[n]}^H \Bw_{\CR_j^r(n)} | ^2} {\sum\limits_{b \, : \, \CR_j^r[b] \ni \Bk_j^r[n]} | \Bh_{\Bk_j^r[n]}^H \Bw_{\CR_j^r(b)} | ^2 + N_0} \\
        &s.t. \quad \sum_{n \in [t+\alpha]} | \Bw_{\CR_j^r(n)} |^2 \leq P_T \; .
    \end{aligned}
\end{equation}

\begin{exmp}
\label{exmp:beamformer_problem}
Consider the network in Example~\ref{exmp:placement_matrix} and the transmitted signal vector $\Bx_1^1$ in~\eqref{eq:the_first_ever_trans_vector} for the first transmission of the round $r=1$. The optimized beamformers $ \Bw_{25}, \Bw_{15}, \Bw_{45}, \Bw_{35}, \Bw_{34}$ can be found by solving
\begin{equation*}
\label{eq:exampleoptim1}
\begin{aligned}
        \max_{\Bw_{\CR}} & \; \min \big\{ \frac{|\Bh_1^H\Bw_{25}|^2}{\lambda_{1}+N_0}, \frac{|\Bh_2^H\Bw_{15}|^2}{\lambda_{2}+N_0}, \\
        & \qquad \qquad 
        \frac{|\Bh_3^H\Bw_{45}|^2}{\lambda_{3}+N_0}, \frac{|\Bh_4^H\Bw_{35}|^2}{\lambda_{4}+N_0}, \frac{|\Bh_5^H\Bw_{34}|^2}{\lambda_{5}+N_0} \big\} \\
        s.t. \; \; & | \Bw_{25} | ^2 + \! | \Bw_{15} | ^2 + \! | \Bw_{45} | ^2 + \! | \Bw_{35} | ^2 + \! | \Bw_{34} | ^2 \leq \! P_T, \\
\end{aligned}
\end{equation*}
where $\lambda_{k}$ denotes the interference at user $k$, given as
\begin{equation*}
    \begin{aligned}
        &\lambda_{1} = |\Bh_1^H\Bw_{15}|^2 , \\
        &
        \lambda_{2} = |\Bh_2^H\Bw_{25}|^2 , \\
        &
        \lambda_{3} = |\Bh_3^H\Bw_{35}|^2+ |\Bh_3^H\Bw_{34}|^2 , \\
        &\lambda_{4} = |\Bh_4^H\Bw_{34}|^2+ |\Bh_4^H\Bw_{45}|^2 , \\
        &
        \lambda_{5} = |\Bh_5^H\Bw_{15}|^2+ |\Bh_5^H\Bw_{25}|^2+ |\Bh_5^H\Bw_{35}|^2+ |\Bh_5^H\Bw_{45}|^2 . \\
    \end{aligned}
\end{equation*}
\end{exmp}

In cyclic caching, as also demonstrated in Example~\ref{exmp:beamformer_problem}, the number of interfering messages experienced by each user does not need to be the same, in general. For the transmission vector $\Bx_1^1$ considered in Example~\ref{exmp:beamformer_problem}, users 1-5 experience 1, 1, 2, 2, and 4 interfering messages, respectively. This unevenness is an intrinsic characteristic of the proposed cyclic caching scheme, while each message is still suppressed at exactly $\alpha-1$ users (in Example~\ref{exmp:beamformer_problem}, there exist exactly $\alpha-1=2$ users in each interference indicator set).\footnote{
We suspect that altering the placement scheme to remove this unevenness may improve the achievable rate due to the optimization problem's max-min structure. Removing this unevenness would require substantial changes to the scheme, however, and is part of the ongoing research.}


The optimized beamformer design problems tend to be non-convex in general and require iterative methods such as successive convex approximation (SCA) to be used~\cite{tolli2017multi}. Such methods can be computationally complex and make the implementation infeasible, especially for large networks. However, the unique unicasting nature of the cyclic caching transmission and the max-min SINR objective in~\eqref{eq:optimizedBeamOP} make the optimization problem quasi-convex~\cite{bengtsson2001optimal,wiesel2005linear}, and hence, allow us to use uplink-downlink duality to attain a simple iterative solution. As the beamformer design with uplink-downlink duality is thoroughly discussed in~\cite{wiesel2005linear}, here we only briefly review the required process. 
Denoting the normalized receive beamforming vectors for the dual uplink channel as $\Bv_{\CR_j^r(n)}$, $r \in [K]$, and $j \in [K-t]$, the uplink-downlink duality necessitates
\begin{equation}
  \sum_{n \in [t+\alpha]} \nu_n\|\Bv_{\CR_j^r(n)}\|^2 = \sum_{n \in [t+\alpha]} \|\Bw_{\CR_j^r(n)}\|^2 \; ,
\end{equation}
where the beamforming vectors $\Bv_{\CR_j^r(n)}$ and their power values $\nu_n$ can be found by maximizing the minimum of dual uplink SINR expressions:
\begin{equation}
    \label{eq:optimizedBeamOPUL}
    \begin{aligned}
        \max_{\Bv_{\CR_j^r(n)}, \nu_n} & \min_{n\in [t+\alpha]} 
        \gamma_n = \frac{ \nu_n | \Bh_{\Bk_j^r[n]}^H \Bv_{\CR_j^r(n)} | ^2} {\sum\limits_{b \, : \,\CR_j^r[b] \ni \Bk_j^r[n]} \nu_b  |\Bh_{\Bk_j^r[b]}^H \Bv_{\CR_j^r(n)} | ^2 + N_0}\\
        s.t. \quad & \sum_{n \in [t+\alpha]} \nu_n  \leq P_T, \quad  \|\Bv_{\CR_j^r(n)}\|^2=1\ \forall \ n \; .
    \end{aligned}
\end{equation}
Note that the interference terms in the denominator of~\eqref{eq:optimizedBeamOPUL} have different indices compared with~\eqref{eq:optimizedBeamOP}. The dual uplink optimization problem in~\eqref{eq:optimizedBeamOPUL} is quasi-convex and can be solved optimally for the given unicast group~\cite{wiesel2005linear}. 
Here we use a standard iterative approach, where we adjust (e.g., by bisection) a target SINR value, denoted by $\bar{\gamma}$, until the power constraint is met with a desired convergence level $| P_T - \sum_{n\in[t+\alpha]} \nu_n |<\epsilon$. However, this requires finding the power coefficients $\nu_n$ resulting in the minimum total power $\sum_{n\in[t+\alpha]} \nu_n$, for a given target SINR value $\bar{\gamma}$. We use another internal iterative loop to address this issue. We first note that the (normalized) MMSE receiver $\Bv_{\CR_j^r(n)} = \frac{\bar{\Bv}}{\|\bar{\Bv}\|}$, where
%
%
\begin{equation}
\label{eq:optimal_mmse_filter}
    \bar{\Bv}= \frac{1}{\sqrt{\nu_n}} \Big(\sum_{b \, : \, \CR_j^r[b] \ni \Bk_j^r[n]} \nu_b \; \Bh_{\Bk_j^r[b]} \Bh_{\Bk_j^r[b]}^H + N_0 \BI \Big)^{-1} \Bh_{\Bk_j^r[n]}
\end{equation}
is the optimal RX beamformer solution for the dual uplink channel, for a fixed set of power values $\nu_n$. Plugging~\eqref{eq:optimal_mmse_filter} into~\eqref{eq:optimizedBeamOPUL}, we can write the uplink SINR compactly as
\begin{equation}
    \gamma_n = \nu_n \Bh_{\Bk_j^r[n]}^H \Big(\sum_{b \, : \, \CR_j^r[b] \ni \Bk_j^r[n]} \nu_b \; \Bh_{\Bk_j^r[b]} \Bh_{\Bk_j^r[b]}^H + N_0 \BI \Big)^{-1} \Bh_{\Bk_j^r[n]} \; .
\end{equation}
Now, 
%
%
for a fixed $\bar{\gamma}$, we can iteratively solve for optimal $\nu_n$ values resulting in the minimum total power, using the joint update method in~\cite{wiesel2005linear}. This involves updating $\nu_n$ values via:
\begin{equation}
\label{eq:dual_SINR}
\begin{aligned}
    &\nu_n^{(m)} = \frac{\bar{\gamma}}{\gamma_n^{(m-1)}} \nu_n^{(m-1)} = \\
    &\frac{\bar{\gamma}} {\Bh_{\Bk_j^r[n]}^H \Big(\sum\limits_{b \, : \, \CR_j^r[b] \ni \Bk_j^r[n]} \nu_b^{(m-1)} \; \Bh_{\Bk_j^r[b]} \Bh_{\Bk_j^r[b]}^H + N_0 \BI \Big)^{-1} \Bh_{\Bk_j^r[n]}} \; ,
\end{aligned}
\end{equation}
until for every $n \in [t+\alpha]$, $\gamma_n = \bar{\gamma}$. Note that the convergence of the joint update method in~\eqref{eq:dual_SINR} can be proved by the standard interference function approach~\cite{yates1995framework}. 

So, in summary, we use an outer iterative loop to find the target SINR value $\bar{\gamma}$ for which the power constraint is met, and an internal iterative loop to find the optimal power coefficients for any given $\bar{\gamma}$.
After the outer loop is converged, we calculate $\Bv_{\CR_j^r(n)}$, and then we can find max-min optimal downlink beamformers using
    $\Bw_{\CR_j^r(n)} = \rho_n \Bv_{\CR_j^r(n)}$,
where $\rho_n$ is the downlink power associated with $\Bw_{\CR_j^r(n)}$. To compute $\rho_n$, we first define $\Ba = [a_1 \; a_2 \; ... \; a_{t+\alpha}]$ and $\BD$ to be a diagonal matrix of elements $a_1$, ..., $a_{t+\alpha}$, where
\begin{equation}
    a_n = \frac{\gamma_n}{(1+\gamma_n) |\Bh_{\Bk_j^r(n)}^H\Bv_n|^2} \; .
\end{equation}
Then, we define $\BG$ as a matrix of elements $g_{n,b}$, $n,b \in [t+\alpha]$, where $g_{n,b} = |\Bh_{\CR_j^r(n)}^H \Bv_b|^2$ if either $b=n$ or $\CR_j^r[b] \ni \Bk_j^r[n]$, and $g_{n,b} = 0$ otherwise. Finally, defining ${\boldsymbol \rho} = [\rho_1 \; \rho_2 \; ... \; \rho_n]$, we have
\begin{equation}
    {\boldsymbol \rho}  = (\BI-\BD \BG)^{-1} N_0 \Ba \; .
\end{equation}

\subsection{Further Reduction in Subpacketization}
\label{section:Emanuele_part}
Interestingly, with appropriate modifications, it is possible to further reduce the subpacketization requirement of cyclic caching by a factor of $(gcd(K,t,\alpha))^2$. This not only reduces the subpacketization requirement (and hence, the implementation complexity) considerably but also enables the subpacketization (and complexity) to be adjusted by tuning the $\alpha$ (and $K$) parameter. For notation simplicity, let us simply use $\phi=gcd(K,t,\alpha)$ to represent the reduction factor. To achieve the reduced subpacketization, we use a user grouping technique inspired by~\cite{lampiris2018adding}. The idea is to split users into groups of size $\phi$ and assume each group is equivalent to a \textit{virtual} user. Then, we consider a virtual network consisting of these virtual users, in which the coded caching and spatial multiplexing gains are $\frac{t}{\phi}$ and $\frac{\alpha}{\phi}$, respectively. Finally, we apply the coded caching scheme proposed in Sections~\ref{section:lin_placement} and~\ref{section:lin_delivery} to the virtual network, and \textit{elevate} the resulting cache placement and delivery schemes to be applicable in the original network. The elevation procedure, which is thoroughly explained in Appendix~\ref{secion:RED_details}, is designed to achieve the maximum DoF of $t+\alpha$ without any increase in the required subpacketization. As a result, the elevated scheme would require the same subpacketization as the scheme applied to the virtual network, which is $\frac{K}{\phi}\big( \frac{t}{\phi} + \frac{\alpha}{\phi} \big)$, as there exist $\frac{K}{\phi}$ virtual users in the virtual network. Here, we explain the proposed procedure with the help of one example.




\vspace{-5pt}
\begin{exmp}
\label{exmp:red_sub_base}
Assume a network scenario with $K=8$, $t=2$ and $\alpha=4$. In this case, $\phi = 2$ and the resulting virtual network has $K'=\frac{K}{\phi} = 4$ virtual users, coded caching gain of $t' = \frac{t}{\phi} = 1$ and spatial multiplexing gain of $\alpha' = \frac{\alpha}{\phi} = 2$. Assume the virtual users correspond to user groups $v_1 = \{1,2\}$, $v_2 = \{3,4\}$, $v_3 = \{5,6\}$ and $v_4 = \{7,8\}$. Applying the proposed cyclic caching scheme, the cache placement in the virtual network is
    $\CZ(v_{k'}) = \{ W_{k'}^q \mid W \in \CF, q \in [3] \}$, $\forall k' \in [4]$,
and the corresponding subpacketization requirement is $K'(t'+\alpha') = 12$. Data delivery is performed in four rounds, where three transmissions are done during each round. The first transmission vector in the first round is built as
\begin{equation}
\label{eq:emanuel_scheme_exmpl_trans_vector}
    {\Bx_1^1}' = \Bw_{v_3}' {W_2^1}'(v_1) + \Bw_{v_3}' {W_1^1}'(v_2) + \Bw_{v_2}' {W_1^1}'(v_3) \; ,
\end{equation}
and other transmission vectors are also built similarly. Now, to elevate the cache placement to be applicable to the original network, we simply bind the cache content of each user in the original network with its corresponding virtual user in the virtual network. So, the cache placement for the original network is
\begin{equation*}
\begin{aligned}
    \CZ(1) = \CZ(2) = \CZ(v_1) \; &= \Big\{ W_1^q \mid W \in \CF, q \in [3] \Big\} \; , \\
    \CZ(3) = \CZ(4) = \CZ(v_2) \; &= \Big\{ W_2^q \mid W \in \CF, q \in [3] \Big\} \; , \\
    \CZ(5) = \CZ(6) = \CZ(v_3) \; &= \Big\{ W_3^q \mid W \in \CF, q \in [3] \Big\} \; , \\
    \CZ(7) = \CZ(8) = \CZ(v_4) \; &= \Big\{ W_4^q \mid W \in \CF, q \in [3] \Big\} \; . \\
\end{aligned}
\end{equation*}
Elevating the delivery procedure is more complex. Let us consider the first transmission of the first round, as provided in~\eqref{eq:emanuel_scheme_exmpl_trans_vector}. The first term in the transmission vector, i.e., ${W_2^1}'(v_1)$, is suppressed at user $v_3$, which is equivalent to users $\{5,6\}$ in the original network. However, ${W_2^1}'(v_1)$ itself is combined of two subpackets destined to users 1 and 2, i.e., $W_2^1(1)$ and $W_2^1(2)$. The interference from these two subpackets should also be suppressed at users 2 and 1, which requires the interference indicator set for the first term to be $\{5,6 \} \cup \{2 \}$ and the one for the second term to be $\{5,6 \} \cup \{1 \}$. Following the same procedure for the second and third terms, the equivalent transmission vector for the original network will be
\begin{equation}
\begin{aligned}
    \Bx_1^1 &= \Bw_{5,6,2} W_2^1(1) + \Bw_{5,6,1} W_2^1(2) + \Bw_{5,6,4} W_1^1(3) \\
    &+ \Bw_{5,6,3} W_1^1(4) + \Bw_{3,4,6} W_1^1(5) + \Bw_{3,4,5} W_1^1(6)\;.
\end{aligned}
\end{equation}
Other transmissions are built similarly. Overall, the algorithm requires subpacketization of 12 and delivers data in 12 transmissions. In comparison, applying cyclic caching without user grouping requires subpacketization of $K(t+L) = 48$, and the number of transmissions would be 48. A graphical representation for the first transmission of the first round, for both virtual and original networks, is provided in Figure~\ref{fig:red_example}.
\vspace{-5pt}
\begin{figure}[ht]
    \centering
    \begin{subfigure}{.4\columnwidth}
        \centering
        \begin{tikzpicture}[scale = 0.5]
            \begin{scope}<+->;
            \draw[step=1cm,very thin,black!60] (0,0) grid (4,4);
            \draw[very thick](0,0)to(0,4);
            \draw[very thick](1,0)to(1,4);
            \draw[very thick](2,0)to(2,4);
            \draw[very thick](3,0)to(3,4);
            \draw[very thick](4,0)to(4,4);
            \draw[very thick](0,0)to(4,0);
            \draw[very thick](0,1)to(4,1);
            \draw[very thick](0,2)to(4,2);
            \draw[very thick](0,3)to(4,3);
            \draw[very thick](0,4)to(4,4);
            \end{scope}
            \begin{scope}
            \FillBlackNf{1}{1};
            \FillBlackNf{2}{2};
            \FillBlackNf{3}{3};
            \FillBlackNf{4}{4};
            \FillBlackNf{3}{3};
            \FillGrayNf{1}{2};
            \FillGrayNf{1}{3};
            \FillGrayNf{2}{1};
            \end{scope}
        \end{tikzpicture}
        \caption{Virtual Network}
        \label{fig:sub_small}
    \end{subfigure}
    \begin{subfigure}{.55\columnwidth}
        \centering
        \begin{tikzpicture}[scale = 0.5]
            \begin{scope}<+->;
            \draw[step=1cm,very thin,black!60] (0,0) grid (8,4);
            \draw[very thick](0,0)to(0,4);
            \draw[very thick](2,0)to(2,4);
            \draw[very thick](4,0)to(4,4);
            \draw[very thick](6,0)to(6,4);
            \draw[very thick](8,0)to(8,4);
            \draw[very thick](0,0)to(8,0);
            \draw[very thick](0,1)to(8,1);
            \draw[very thick](0,2)to(8,2);
            \draw[very thick](0,3)to(8,3);
            \draw[very thick](0,4)to(8,4);
            \end{scope}
            \begin{scope}
            \FillBlackNf{1}{1};
            \FillBlackNf{1}{2};
            \FillBlackNf{2}{3};
            \FillBlackNf{2}{4};
            \FillBlackNf{3}{5};
            \FillBlackNf{3}{6};
            \FillBlackNf{4}{7};
            \FillBlackNf{4}{8};
            \FillGrayNf{1}{3};
            \FillGrayNf{1}{4};
            \FillGrayNf{1}{5};
            \FillGrayNf{1}{6};
            \FillGrayNf{2}{1};
            \FillGrayNf{2}{2};
            \end{scope}
        \end{tikzpicture}
        \caption{Original Network}
        \label{fig:sub_big}
    \end{subfigure}
    \caption{Graphical illustration of transmission vector in Example~\ref{exmp:red_sub_base}}
    \label{fig:red_example}
\end{figure}
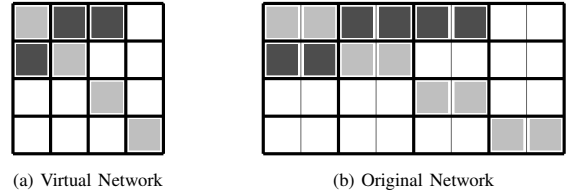
\vspace{-15pt}
\end{exmp}


\section{Complexity and Performance Analysis}
\label{section:sim_results}
\subsection{Complexity Analysis}
For two main reasons, the subpacketization value is an important complexity indicator for any coded caching scheme. First, it indicates the number of smaller parts each file must be split into for the scheme to work correctly. As argued in~\cite{lampiris2018adding}, the exponentially growing subpacketization of conventional coded caching schemes can make their implementation infeasible, even for networks with a moderate number of users. Second, a scheme with smaller subpacketization \textit{generally} requires a smaller number of transmissions, and consequently, fewer beamformer design problems to be solved. For comparison, cyclic caching requires subpacketization and transmission count of $\frac{K(t+ \alpha)}{\phi_{K,t,\alpha}^2}$ and $\frac{K(K-t)}{\phi_{K,t,\alpha}^2}$, respectively. Both these numbers are considerably smaller than the original multi-antenna scheme in~\cite{shariatpanahi2018physical}, which requires subpacketization and transmission count of $\binom{K}{t} \binom{K-t-1}{L-1}$ and $\binom{K}{t+L}$, respectively.

\begin{table*}[t]
    \centering
    \begin{tabular}{|m{0.24\textwidth}|m{0.7\textwidth}|}
        \hline
         Reduced subpacketization & Cyclic caching enables achieving the sum-DoF of $t+\alpha$, with the reduced subpacketization of $\frac{K(t+\alpha)}{\phi_{K,t,\alpha}^2}$. \\
         \hline
         Reduced number of transmissions & With cyclic caching, delivery to all users is completed with only $\frac{K(K-t)}{\phi_{K,t,\alpha}^2}$ transmissions. \\
         \hline
         Relying only on unicasting & Cyclic caching relies on unicasting only. As a result, optimized MMSE-type beamformers can be implemented with much lower complexity using the uplink-downlink duality. This improves the performance at low- and mid-SNR, compared with ZF beamforming. \\
         \hline
         No MAC decoding & Cyclic caching removes the requirement of MAC decoding, thus eliminating the necessity of complex receiver structures such as successive interference cancellation (SIC). \\
         \hline
         Controlling the complexity & The subpacketization of cyclic caching can be controlled by tuning the $\phi_{K,t,\alpha}$ parameter, which is possible by adjusting $\alpha$ and also by considering a set of $K_f$ phantom users. \\
         \hline
    \end{tabular}
    \caption{Advantages of the Cyclic Caching Scheme}
    \label{tab:advantages}
\end{table*}

%
%

\begin{table*}[t]
\centering
    \centering
    \resizebox{\textwidth}{!}{%
    \begin{tabular}{|c||c|c||c|c||c|c||c|c||c|c|}
         \hline
         Scheme & \multicolumn{2}{c||}{M-S} & \multicolumn{2}{c||}{LIN} & \multicolumn{2}{c||}{RED} & \multicolumn{2}{c||}{L-E} & \multicolumn{2}{c|}{M-B} \\
         \hline
         Scaling of $t$ & fixed & scaling & fixed & scaling & fixed & scaling & fixed & scaling & fixed & scaling \\
         \hline
         \hline
         Subpacketization & $O(K^t K^{L-1})$ & $O(2^{K H(\gamma)} K^{L-1})$ & $O(K)$ & $O(K^2)$ & $O(K)$ & $O(K^2)$ & $O(K^{\nicefrac{t}{L}})$ & $O(2^{K \nicefrac{H(\gamma)}{L^2}})$ & $O(K^t)$ & $O(2^{K H(\gamma)})$ \\
         \hline
         Trans. Count & $O(K^{t+L})$ & $O(2^{K H(\gamma)})$ & $O(K^2)$ & $O(K^2)$ & $O(K^2)$ & $O(K^2)$ & $O(K^{\nicefrac{t}{L}+1})$ & $O(2^{K \nicefrac{H(\gamma)}{L}})$ & $O(K^{t+1})$ & $O(2^{K H(\gamma)})$ \\
         \hline
    \end{tabular}%
    }
    \caption{Complexity order of different schemes when $t$ is fixed and when it scales with $K$, $\gamma = \frac{M}{N}$, $H(\cdot)$ is the entropy function}
    \label{tab:comp_scal_noscal}
\end{table*}

\begin{table*}[t]
    \centering
    
    \begin{minipage}{\textwidth}
    \centering
    \begin{tabular}{|c|c|c|c|c||c|c|c|c|c||c|c|c|c|c|}
        \cline{6-15}
        \multicolumn{5}{c||}{} & \multicolumn{5}{c||}{Required Subpacketization} & \multicolumn{5}{c|}{Required number of transmissions ($I$)} \\
        \hline
        $K$ & $t$ & $L$ & $\alpha$ & $K_f$ & M-S & LIN & RED & L-E & M-B &  M-S & LIN & RED & L-E & M-B \\
        \hline
        \hline
        8 & 2 & 5 & 2 & 0 & 140 & 32 & 8 & 4 & - & 70 & 48 & 12 & 6 & -  \\
        \hline
        8 & 2 & 5 & 4 & 0 & 280 & 48 & 12 & - & 28 & 28 & 48 & 12 & - & 28 \\
        \hline
        8 & 2 & 5 & 5 & 0 & 140 & 56 & 56 & - & - & 8 & 48 & 48 & - & - \\
        \hline
        30 & 4 & 8 & 4 & 0 & $> 10^7$ & 240 & 60 & - & -  & $> 10^6$ & 780 & 195 & - & -  \\
        \hline
        30 & 4 & 8 & 4 & 2 & $> 10^8$ & 256 & 16 & 8 & - & $> 10^7$ & 832 & 52 & 28 & - \\
        \hline
        30 & 4 & 8 & 6 & 0 & $> 10^{9}$ & 300 & 75 & - & $> 10^{4}$ & $> 10^{7}$ & 780 & 195 & - & $> 10^{4}$ \\
        \hline
        100 & 15 & 30 & 15 & 0 & $> 10^{32}$ & 3000 & 120 & - & - & $> 10^{32}$ & 8500 & 340 & - & - \\
        \hline
        100 & 15 & 30 & 15 & 5 & $> 10^{33}$ & 3150 & 14 & 7 & - & $> 10^{26}$ & 9450 & 42 & 21 & - \\
        \hline
        100 & 15 & 30 & 17 & 0 & $> 10^{34}$ & 3200 & 3200 & - & $> 10^{17}$  & $> 10^{26}$ & 8500 & 8500 & - & $> 10^{17}$ \\
        \hline
        400 & 50 & 200 & 100 & 0 & $> 10^{153}$ & $>10^{4}$ & 24 & - & - & $> 10^{113}$ & $>10^{5}$ & 56 & - & - \\
        \hline
    \end{tabular}
    \caption{Complexity comparison for some example network setups}
    \label{tab:subpack_comp}
    \end{minipage}
    
    
\end{table*}

In addition to the performance and complexity benefits of reduced subpacketization, cyclic caching also has the critical advantage of relying on unicasting only, unlike other traditional schemes that rely on high-order multicast messages (e.g., using XOR-ed codewords).  
Of course, as discussed in~\cite{salehi2019subpacketization}, removing multicasting causes performance loss. However, it enables high-performance optimized (MMSE-type) beamformers to be applicable with low complexity, even for large networks with a large sum-DoF. 
In the following subsection, we provide simulation results for large networks with up to $K=100$ users, in which optimized beamformers are used. To the best of our knowledge, this is the first time a multi-antenna coded caching scheme has been applied with optimized beamformers to such a large network with a large sum-DoF.

From another perspective, cyclic caching also removes the requirement of decoding multiple data parts jointly at the same user during a single transmission. As a result, there is no need for complex receiver schemes such as successive interference cancellation (SIC). Cyclic caching requires that every user in the target group decodes only one message during every transmission, while in~\cite{shariatpanahi2018physical}, $\binom{t+L-1}{t}$ terms must be jointly decoded. 
Of course, this complexity reduction is also possible by splitting each transmission with overlapping multicast messages into multiple TDMA intervals, as shown in~\cite{tolli2017multi}. However, it comes with a substantial subpacketization increase.

Interestingly, cyclic caching also enables tuning $\alpha$ and $K$ parameters jointly to reduce both subpacketization and transmission count. 
This reduction 
is useful especially for large networks with a large number of users $K$, for which the complexity of coded caching schemes is critically limiting their practical implementation~\cite{lampiris2018adding}. Selecting $\alpha$ to be smaller than the antenna count is straightforward and explained in~\cite{tolli2017multi}. However, in order to tune $K$, we consider a set of $K_f$ additional \textit{phantom} users and tune both $\alpha$ and $K_f$ to maximize $gcd(K+K_f,t,\alpha)$. 
$K_f$ phantom users are then omitted during all the subsequent transmissions. 
Of course, tuning either parameter comes with a DoF loss. For $\alpha$, this is not necessarily an issue, especially when the communication is at finite-SNR. In~\cite{tolli2017multi}, it is shown that by choosing $\alpha<L$, one can obtain an improved beamforming gain, which considerably improves the performance at the finite-SNR regime. The joint impact of $\alpha$ and $K_f$ tuning is studied through numerical simulations in Section~\ref{sec:sim_results}. 

In principle, the reduced subpacketization scheme of Section~\ref{section:Emanuele_part} can be applied by splitting users into groups of size $Q>1$, such that $gcd(K,t,\alpha)$ is divisible by $Q$. This enables selecting several subpacketization levels, between $K(t+\alpha)$ and $\frac{K(t+\alpha)}{gcd(K,t,\alpha)^2}$. However, as we show later, the performance of cyclic caching is almost intact regardless of the selected subpacketization, and hence, it makes sense to select the largest possible $gcd(K,t,\alpha)$ value.

In Table~\ref{tab:advantages}, we have summarized the key advantages of the cyclic caching scheme. Moreover, in Table~\ref{tab:comp_scal_noscal}, we have compared the complexity order, in terms of both subpacketization requirement and the total number of transmissions, for the multi-antenna scheme of~\cite{shariatpanahi2018physical} (M-S), cyclic caching without user grouping (LIN), cyclic caching with user grouping (RED), the original group-based scheme in~\cite{lampiris2018adding} (L-E), and the recently proposed scheme in~\cite{mohajer2020miso} (M-B). In Table~\ref{tab:comp_scal_noscal}, the complexity order is provided for two cases where the global cache ratio $t = \frac{KM}{N}$ is fixed and when it scales with the number of users $K$. It $t$ does not scale with $K$, we have simply used
\begin{equation}
\label{eq:order_approx_noscale}
    \binom{K}{t} = \frac{K!}{t!(K-t)!} = O(K^t) \; .
\end{equation}
However, if $t$ scales with $K$ (i.e., if $\gamma = \frac{M}{N}$ does not scale with $K$), this order approximation is no longer valid. In this case, to approximate $\binom{K}{t}$, we can consider another problem where we repeat a binary experiment with the success probability of $\gamma$ for $K$ times. As $K$ grows large, according to the law of large numbers, the number of \textit{typical} sequences (i.e., sequences with $K \gamma = t$ success outcomes) would be $\binom{K}{t}$. However, from information theory, we also know that the number of typical sequences approaches $H(\gamma)$, where 
\begin{equation*}
    H(\gamma) = -\gamma \log_2 \gamma -(1-\gamma) \log_2 (1-\gamma) 
\end{equation*}
is the entropy function. Hence, when $t$ scales with $K$ we can use the approximation
    $\binom{K}{t} = O(2^{KH(\gamma)})$.
From Table~\ref{tab:comp_scal_noscal}, it is clear that the complexity order of the proposed cyclic caching scheme is considerably smaller than all other schemes  (linear if $t$ does not scale with $K$, and quadratic otherwise).

Finally, in Table~\ref{tab:subpack_comp}, we have respectively compared the subpacketization requirement and the transmission count in some different network setups for M-S, LIN, RED, L-E, and M-B schemes.
In the table, many entries are left empty for L-E and M-B schemes due to their tight restrictions on the network parameters. The L-E scheme is originally designed for networks in which $t \ge \alpha$ and $\alpha$ divides both $t$ and $K$, while M-B requires $\frac{t+\alpha}{t+1}$ to be an integer.

From Table~\ref{tab:subpack_comp}, it is clear that except for the L-E scheme, the RED scheme has the lowest subpacketization requirement among all the schemes. 
Also, regarding the number of transmissions, except for a specific case (the third row in the table), LIN and RED outperform all other schemes. It can be seen that, even when the user count is as low as $K=30$, the M-S scheme becomes infeasible due to the substantial values for both the subpacketization and the transmission count. On the other hand, it is possible to implement RED even for a very large network of $K=400$ users in which the spatial DoF is $\alpha=100$. The results also show how proper tuning of $\alpha$ and $K_f$ can help further reduce both the complexity and subpacketization in the RED scheme.


\subsection{Simulation Results}\label{sec:sim_results}

We use numerical simulations to compare the performance of cyclic caching with other schemes. The symmetric rate, as defined in~\eqref{eq:sym_rate}, is used as the comparison metric. 
The L-E and M-B schemes are ignored in the simulations as they require tight restrictions on network parameters to work without significant performance (DoF) loss. Moreover, for the sake of comparison, we also consider a baseline scheme without coded caching, denoted by No-CC, in which only the local caching gain at each user is attained together with the spatial multiplexing gain. In the baseline scheme, we create $K$ transmission vectors, where users $\{1,2,...,\alpha\}$ are served during the first transmission ($i=1$), while for $i > 1$, the served user indices are a circular shift of the user indices targeted at transmission $i-1$. 
For all the simulations, we use max-min-SINR optimal (MMSE-type) beamformers, found through solving~\eqref{eq:optimizedBeamOP}. For LIN, RED, and No-CC schemes, the beamformers are designed with the uplink-downlink duality solution described in Section~\ref{section:lin_beamformer}. For the M-S scheme, we use the more complex SCA method detailed in~\cite{tolli2017multi}.

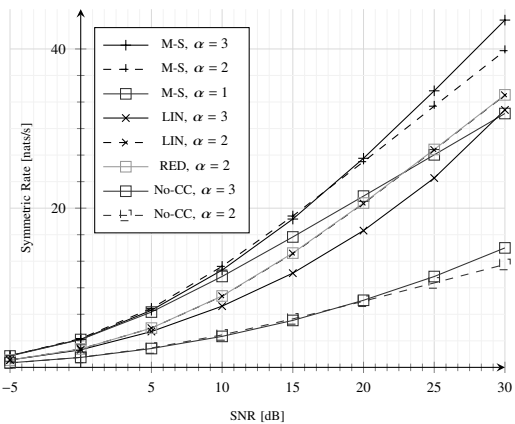
\begin{figure}
    \centering

    \begin{tikzpicture}

    \begin{axis}
    [
    width = 0.45\textwidth,
    height = 0.35\textwidth,
    axis lines = center,
    xlabel near ticks,
    xlabel = \smaller {\tiny SNR [dB]},
    ylabel = \smaller {\tiny Symmetric Rate [nats/s]},
    ylabel near ticks,
    ymin = 0, ymax = 45,
    legend style={at={(0.17,0.95)},anchor=north west},
    ticklabel style={font=\tiny},
    grid=both,
    major grid style={line width=.2pt,draw=gray!30},
    grid style={line width=.1pt, draw=gray!10},
    minor tick num=5,
    ]
    
    
    \addplot
    [mark = +, black]
    table[y=MS-A3,x=SNR]{Data/SN1-3-Rate.tex};
    \addlegendentry{\tiny M-S, $\alpha=3$}
    
    \addplot
    [mark = +, dashed, black]
    table[y=MS-A2,x=SNR]{Data/SN1-3-Rate.tex};
    \addlegendentry{\tiny M-S, $\alpha=2$}
    
    \addplot
    [mark = square, black!80]
    table[y=MS-A1,x=SNR]{Data/SN1-3-Rate.tex};
    \addlegendentry{\tiny M-S, $\alpha=1$}
    
    \addplot
    [mark = x, black]
    table[y=LIN-A3,x=SNR]{Data/SN1-3-Rate.tex};
    \addlegendentry{\tiny LIN, $\alpha = 3$}
    
    \addplot
    [mark = x, dashed, black]
    table[y=LIN-A2,x=SNR]{Data/SN1-3-Rate.tex};
    \addlegendentry{\tiny LIN, $\alpha = 2$}
    
    \addplot
    [mark = square, black!40]
    table[y=RED-A2,x=SNR]{Data/SN1-3-Rate.tex};
    \addlegendentry{\tiny RED, $\alpha = 2$}
    
    \addplot
    [mark = square, black!80]
    table[y=NOC-A3,x=SNR]{Data/SN1-3-Rate.tex};
    \addlegendentry{\tiny No-CC, $\alpha=3$}
    
    \addplot
    [mark = square, dashed, black!80]
    table[y=NOC-A2,x=SNR]{Data/SN1-3-Rate.tex};
    \addlegendentry{\tiny No-CC, $\alpha=2$}
    
    \end{axis}

    \end{tikzpicture}

    \caption{M-S vs LIN vs RED vs No-CC rate; $K=6$, $t=2$, $L=3$}
    \label{fig:smallnet_alpha-2-3}
\end{figure}

As discussed earlier, the complexity of the M-S scheme makes its implementation infeasible even for moderate-sized networks. In order to be able to compare all the schemes, we consider a small network of $K=6$ users. The performance comparison results are provided in Figure~\ref{fig:smallnet_alpha-2-3}. It can be seen that the performance values of LIN and RED schemes lie between M-S and No-CC. M-S provides better performance because it benefits from a \textit{multicasting} gain, i.e., a single codeword (created with the XOR operation) benefits all the users in a multicast group. This multicasting effect is explained in more detail in~\cite{salehi2019subpacketization}, with the help of the so-called efficiency index parameter. On the other hand, the No-CC scheme has the worst performance as it lacks the coded caching gain entirely. It should also be noted that choosing a smaller $\alpha$ value  improves the performance of both \mbox{M-S} and LIN schemes at the lower SNR range, while this effect is more prominent for the LIN scheme.\footnote{Note that when $\alpha=1$, placement and delivery algorithms in M-S scheme are the same as the original scheme of~\cite{maddah2014fundamental}. The beamformer design also concedes with the max-min fair beamforming in~\cite{shariatpanahi2018physical}.} This result is in line with the findings of~\cite{tolli2017multi,lampiris2019bridging}, where a smaller $\alpha$ is shown to improve the performance at the finite-SNR regime due to a better beamforming gain. Finally, Figure~\ref{fig:smallnet_alpha-2-3} demonstrates that the RED scheme provides the same performance as the LIN scheme but with lower complexity. This near-identical performance is a result of the two schemes having a very similar coding and interference-cancellation structure.

Unfortunately, although the M-S scheme has superior performance, its high complexity makes the implementation infeasible even for slightly larger networks. For example, for the small network of 6 users considered in Figure~\ref{fig:smallnet_alpha-2-3}, with our simulation setup, the required time for simulating the M-S scheme is $O(10^3)$ times larger than the LIN scheme. As the network size grows, this ratio between the simulation times also grows exponentially. On the other hand, the simulation time for the RED scheme is roughly four times smaller than LIN. This is in line with the fact that for the considered network with $K=6,t=2,\alpha=2$, subpacketization and transmission count are reduced by a factor of $gcd(K,t,\alpha)^2=4$. 



\begin{figure*}
\centering
\begin{subfigure}{.5\textwidth}

    \centering

    \begin{tikzpicture}

    \begin{axis}
    [
    width = 0.90\columnwidth,
    height = 0.60\columnwidth,
    axis lines = center,
    xlabel near ticks,
    xlabel = \smaller {\tiny User Count ($K$)},
    ylabel = \smaller {\tiny Symmetric Rate [nats/s]},
    ylabel near ticks,
    ymin = 0, ymax = 10, xmax=30,
    legend style={at={(0.2,0.95)},anchor=north west},
    ticklabel style={font=\tiny},
    grid=both,
    major grid style={line width=.2pt,draw=gray!30},
    ]
    
    \addplot
    [mark = x, black]
    table[y=TotalRate,x=K]{Data/K-Change.tex};
    
    
    \end{axis}

    \end{tikzpicture}

    \caption{Performance vs $K$, \; $L = 4$
    }
    \label{fig:k_change}

\end{subfigure}%
\begin{subfigure}{.5\textwidth}

    \centering

    \begin{tikzpicture}

    \begin{axis}
    [
    width = 0.90\columnwidth,
    height = 0.60\columnwidth,
    axis lines = center,
    xlabel near ticks,
    xlabel = \smaller {\tiny Antenna Count ($L$)},
    ylabel = \smaller {\tiny Symmetric Rate [nats/s]},
    ylabel near ticks,
    ymin = 0, xmax=24,
    legend style={at={(0.2,0.95)},anchor=north west},
    ticklabel style={font=\tiny},
    grid=both,
    major grid style={line width=.2pt,draw=gray!30},
    ]
    
    \addplot
    [mark = x, black]
    table[y=TotalRate,x=L]{Data/L-Change.tex};
    
    
    \end{axis}

    \end{tikzpicture}

    \caption{Performance vs $L$, \; $K = 30$
    }
    \label{fig:l_change}

\end{subfigure}
\caption{Performance of RED vs $K$ and $L$ parameters, $t=2$, $\alpha=L$, SNR$=10\mathrm{dB}$}
\label{fig:merge_klchange}
\end{figure*}

In Figure~\ref{fig:merge_klchange}, we have analyzed the performance of the RED scheme with respect to $K$ and $L$ parameters while assuming $t=2$ and $\alpha=L$. From Figure~\ref{fig:k_change}, we see that with the DoF value $t+L$ fixed, the performance slightly reduces as $K$ increases. This is because we have assumed a fixed $t = \frac{KM}{N}$ value, which means the local caching gain $\frac{M}{N}$ is reduced as $K$ is increased. On the other hand, from Figure~\ref{fig:l_change}, we see that increasing $L$ results in a linear increase in rate. This is simply because increasing $L$ improves DoF, and accordingly, the system performance.

Figure~\ref{fig:merge_largenet} illustrates
the performance of the RED scheme for a large network with $K=100$,$t=10$, and $L=25$. To the best of our knowledge, this is the first time a coded caching scheme is applied with optimized beamformers for such a large network. In Figure~\ref{fig:largenet_L25}, the results for the No-CC scheme are included for comparison. It can be verified that decreasing $\alpha$ from 20 to 10 gives a small performance boost at the low- to moderate-SNR regime ($<15$dB) due to improved beamforming gain. On the other hand, at larger SNR values, the $\alpha=20$ setup performs much better due to the increased spatial multiplexing gain. These are in line with the findings in~\cite{tolli2017multi}.

    
\begin{figure*}
\centering
\begin{subfigure}{.5\textwidth}

    \centering

    \begin{tikzpicture}

    \begin{axis}
    [
    width = 0.90\columnwidth,
    height = 0.70\columnwidth,
    axis lines = center,
    xlabel near ticks,
    xlabel = \smaller {\tiny SNR [dB]},
    ylabel = \smaller {\tiny Symmetric Rate [nats/s]},
    ylabel near ticks,
    ymin = 0, ymax = 200, xmax=30,
    legend style={at={(0.2,0.95)},anchor=north west},
    ticklabel style={font=\tiny},
    grid=both,
    major grid style={line width=.2pt,draw=gray!30},
    ]
    
    \addplot
    [mark = x, dashed, black]
    table[y=RED-A10,x=SNR]{Data/SN4-6-L25-Rate.tex};
    \addlegendentry{\tiny RED, $\alpha=10$}
    
    \addplot
    [black!50]
    table[y=RED-A15,x=SNR]{Data/SN4-6-L25-Rate.tex};
    \addlegendentry{\tiny RED, $\alpha=15$}
    
    \addplot
    [mark = +, black!80]
    table[y=RED-A20,x=SNR]{Data/SN4-6-L25-Rate.tex};
    \addlegendentry{\tiny RED, $\alpha=20$}
    
    \addplot
    [mark = square, dashed, black]
    table[y=NOC-A10,x=SNR]{Data/SN4-6-L25-Rate.tex};
    \addlegendentry{\tiny No-CC, $\alpha=10$}
    
    \addplot
    [dashed, black!50]
    table[y=NOC-A15,x=SNR]{Data/SN4-6-L25-Rate.tex};
    \addlegendentry{\tiny No-CC, $\alpha=15$}
    
    \addplot
    [mark = x, black!80]
    table[y=NOC-A20,x=SNR]{Data/SN4-6-L25-Rate.tex};
    \addlegendentry{\tiny No-CC, $\alpha=20$}
    
    \end{axis}

    \end{tikzpicture}

    \caption{Performance vs $\alpha$
    }
    \label{fig:largenet_L25}

\end{subfigure}%
\begin{subfigure}{.5\textwidth}

    \centering

    \begin{tikzpicture}

    \begin{axis}
    [
    width = 0.90\columnwidth,
    height = 0.70\columnwidth,
    axis lines = left,
    xlabel near ticks,
    xlabel = \smaller {\tiny SNR [dB]},
    ylabel = \smaller {\tiny Performance Ratio [\%]},
    ylabel near ticks,
    ymin = -5, ymax = 20,
    legend pos = north west,
    ticklabel style={font=\tiny},
    grid=both,
    major grid style={line width=.2pt,draw=gray!30},
    ybar interval=0.7,
    ]
    
    \addplot
    [black, fill=black!40]
    table[y=RED15-10,x=SNR]{Data/SN4-6-L25-Rate.tex};
    \addlegendentry{\tiny $\alpha=15$}
    
    \addplot
    [black, fill=black!20]
    table[y=RED20-10,x=SNR]{Data/SN4-6-L25-Rate.tex};
    \addlegendentry{\tiny $\alpha=20$}
    
    \end{axis}

    \end{tikzpicture}

    \caption{Performance ratio w.r.t. $\alpha=10$
    }
    \label{fig:largenet_L25_ratio}

\end{subfigure}
\caption{Performance of RED for a large network, $K=100$, $t=10$, $L=25$}
\label{fig:merge_largenet}
\end{figure*}


\begin{figure*}
\centering
\begin{subfigure}{.5\textwidth}

    \centering

    \begin{tikzpicture}

    \begin{axis}
    [
    width = 0.90\columnwidth,
    height = 0.70\columnwidth,
    axis lines = center,
    xlabel near ticks,
    xlabel = \tiny {SNR [dB]},
    ylabel = \tiny {Symmetric Rate [nats/s]},
    ylabel near ticks,
    ymin = 0, ymax = 160, xmax=30,
    legend style={at={(0.2,0.95)},anchor=north west},
    ticklabel style={font=\tiny},
    grid=both,
    major grid style={line width=.2pt,draw=gray!30},
    ]
    
    \addplot
    [mark = x, dashed, black]
    table[y=RED-KF0,x=SNR]{Data/SN7-8-K100-Rate.tex};
    \addlegendentry{\tiny $K=100$, $K_f=0$}
    
    \addplot
    [black!50]
    table[y=RED-KF5,x=SNR]{Data/SN7-8-K100-Rate.tex};
    \addlegendentry{\tiny $K=100$, $K_f=5$}
    
    \addplot
    [mark = +, black!80]
    table[y=RED-KF0,x=SNR]{Data/SN7-8-K30-Rate.tex};
    \addlegendentry{\tiny $K=30$, $K_f=0$}
    
    \addplot
    [mark = square, dashed, black]
    table[y=RED-KF5,x=SNR]{Data/SN7-8-K30-Rate.tex};
    \addlegendentry{\tiny $K=30$, $K_f=5$}
    
    \end{axis}

    \end{tikzpicture}

    \caption{Performance vs $K_f$
    }
    \label{fig:kf_effect}

\end{subfigure}%
\begin{subfigure}{.5\textwidth}

    \centering

    \begin{tikzpicture}

    \begin{axis}
    [
    width = 0.90\columnwidth,
    height = 0.70\columnwidth,
    axis lines = left,
    xlabel near ticks,
    xlabel = \tiny {SNR [dB]},
    ylabel = \tiny {Performance Ratio [\%]},
    ylabel near ticks,
    ymin = -15, ymax = 10,
    legend pos = north west,
    ticklabel style={font=\tiny},
    grid=both,
    major grid style={line width=.2pt,draw=gray!30},
    ybar interval=0.7,
    ]
    
    \addplot
    [black, fill=black!40]
    table[y=RED5-0,x=SNR]{Data/SN7-8-K100-Rate.tex};
    \addlegendentry{\tiny $K=100$, $K_f=5$}
    
    \addplot
    [black, fill=black!20]
    table[y=RED5-0,x=SNR]{Data/SN7-8-K30-Rate.tex};
    \addlegendentry{\tiny $K=30$, $K_f=5$}
    
    \end{axis}

    \end{tikzpicture}

    \caption{Performance ratio w.r.t. $K_f=0$
    }
    \label{fig:kf_effect_ratio}

\end{subfigure}
\caption{The effect of the $K_f$ parameter, RED scheme, $t=7$, $L=20$, $\alpha = 14$}
\label{fig:merge_phantom}
\end{figure*}





Finally, the complexity reduction effect of the $K_f$ parameter is analyzed in Figure~\ref{fig:merge_phantom} 
for two network scenarios of size $K_1=100$ and $K_2=30$ users. In both networks, we have assumed $t=7$, $L=20$ and $\alpha = 14$. Without any phantom users, $gcd(K,t,\alpha)=1$ and the subpacketization values for the two networks are $S_1 = 2100$ and $S_2=630$, respectively. However, by adding only $K_f=5$ phantom users, the common denominator becomes $gcd(K+K_f,t,\alpha)=7$, and hence, the subpacketization values are reduced to $S_1 = 45$ and $S_2 = 15$, respectively. Interestingly, the decrease in the achievable rate due to adding $K_f$ phantom users is relatively minor for both networks. In fact, for the larger network, the performance loss is less than 4\% over the entire SNR range, while for the smaller network, the deterioration is less than 15\%. On the other hand, adding the extra phantom users reduces the simulation time in our setup by a factor of $\sim 10$. 

\section{Conclusion and Future Work}
\label{section:conclusion}

\rev{We proposed cyclic caching, a novel low-complexity, high-performance coded caching scheme for large-scale cache-aided MISO networks where the spatial DoF is not smaller than the coded caching gain. For a fixed caching gain, cyclic caching achieves theoretical sum-DoF optimality with a subpacketization requirement that scales linearly with the number of users in the network. Moreover, it delivers all the requested files with a relatively small number of transmissions and does not require the users to decode multiple messages simultaneously. As a result, the scheme can be employed in large networks with many users, at least one order of magnitude larger than other well-known multi-antenna coded caching schemes.
}

\rev{
Our proposed scheme is centralized, i.e., the server knows the number and identity of the users beforehand and dictates the cached content for each user. A decentralized extension for this work should handle scenarios where such information is unavailable. However, the classic decentralized approach (cf.~\cite{maddah2015decentralized}) where users cache at random a set of bits from the library heavily degrades the performance in the practical finite file size regime considered here. Moreover, our RED scheme depends largely on the existence of similar cached contents at different users, a property that very unlikely is preserved in a fully decentralized approach. Hence, we believe that developing a semi-decentralized scheme, similar to~\cite{jin2019new} where each user picks at random one of the pre-defined cache states, can result in a good performance even in the finite file size regime. Other topics to be considered include extending the applicability of the scheme to the $t > L$ regime and investigating the effect of the non-equal number of interference terms for different users.

}
\appendix
\subsection{More Detailed Analysis of the Delivery Phase}
\label{section:ICCvalidity}
For the delivery and decoding procedures presented in Sections~\ref{section:lin_delivery} and~\ref{section:lin_decoding}, we here show that at the end of the $K$ rounds, each user successfully decodes all its missing data parts. 
We recall that for the placement matrix $\BV$, if $\BV[p,k] = 0$ for some $p \in [K]$ and $k \in [K]$, then the packet $W_p(k)$, which is comprised of $t+\alpha$ smaller subpackets of the form $W_p^q(k)$, must be delivered to user $k$. 
By construction of the user index and packet index vectors $\Bk_j^r$ and $\Bp_j^r$, where $r\in[K]$ and $j\in [t+\alpha]$, it is easy to see that for any $p\in[K]$ and $k\in[K]$, if $\BV[p,k]=0$ there exists always a transmission vector that contains one of the $t+\alpha$ subpackets represented by $\BV[p,k]$; i.e. there exists always a triple $(j,r,n)$ such that $p=\Bp_j^r[n]$ and $k=\Bk_j^r[n]$.
Therefore, it is sufficient to show that if $\BV[p,k]=0$ for some pair $(p,k)$, there exist exactly $t+\alpha$ triples $(j,r,n)$ for which $p=\Bp_j^r[n]$ and $k=\Bk_j^r[n]$.

Without loss of generality, let us consider the first round of the delivery scheme. The vector $\Bk_j^1$ contains the index of the users targeted at transmission $j$ of this round. Let us split the user indices in $\Bk_j^1$ into the following two vectors
\begin{equation*}
    \begin{aligned}
&\Bc_j^1=[1:t] \; , \quad
\Bs_j^1=[(([1:\alpha]+j-1)\%(K-t))+t] \; .
\end{aligned}
\end{equation*}
Similarly, we split the packet indices in $\Bp_j^1$ into the following two vectors
\begin{align}
&\Br_j^1=[((t+j-[1:t])\%(K-t))+[1:t]] \; ,  \quad
\Bd_j^1=\Be(\alpha) \; .   \nonumber
\end{align}
Now let us consider the set of points in the tabular representation corresponding to the vector pair $(\Br_1^j,\Bc_1^j)$; i.e., the set of points in which the row index is taken from $\Br_1^j$ and column index is taken from $\Bc_1^j$. We will use $\{(\Br_1^j,\Bc_1^j)\}$ to denote this set. From the graphical example in Section~\ref{section:graphical_rep}, we recall that $\{(\Br_1^j,\Bc_1^j)\}$ is an element-wise circular shift of $\{(\Br_1^1,\Bc_1^1)\}$, over the non-shaded cells of the tabular representation and in the vertical direction.
Similarly, $\{(\Bd_1^j,\Bs_1^j)\}$ is an element-wise circular shift of $\{(\Bd_1^1,\Bs_1^1)\}$, over the non-shaded cells and in the horizontal direction. As a result, in round $1$, the following two statements hold:
\begin{itemize}
    \item for every user $k\! \in\! [K]$ and packet index $p\! \in\! [K]$, if $k$ is in $\Bc_1^1=\Bc_2^1=...=\Bc_{K-t}^1$ and $p$ is in $[\Br_1^1;...;\Br_{K-t}^1]$, there exists exactly one transmission index $j \in [K-t]$, such that the $j$-th transmission of the first round delivers $W_p^q(k)$ to user $k$, for some subpacket index $q \in [t+\alpha]$. In other words there exists exactly one triple $(j,1,n)$ for the pair $(p,k)$;
    \item for every user $k \in [K]$, if $k$ is in $[\Bs_1^1;\Bs_2^1;...;\Bs_{K-t}^1]$, there exist exactly $\alpha$ transmissions in the first round that deliver $W_1^q(k)$ to user $k$, each with a distinct subpacket index $q \in [t+\alpha]$. In other words, there exist exactly $\alpha$ triples $(j,1,n)$, for the pair $(1,k)$.
\end{itemize}
These statements also hold for other transmission rounds, i.e., transmission round $r$ where $1 < r \le K$. However, for any $k,p \in [K]$ such that $\BV [p,k] = 0$, there exists exactly one round $r$ for which $k$ is in $[\Bs_1^r;\Bs_2^r;...;\Bs_{K-t}^r]$, while there exist $t$ different rounds $r$ for which $k$ is in $\Bc_1^r=\Bc_2^r=...=\Bc_{K-t}^r$. Hence, for any $k,p \in [K]$ such that $\BV [p,k] = 0$, there exist 
\begin{equation*}
    \alpha \times 1 + 1 \times t = \alpha+t
\end{equation*}
triples $(j,r,n)$ for the pair $(p,k)$. This clarifies the correctness of the delivery algorithm proposed in Section~\ref{section:lin_delivery}.

\subsection{Reducing Subpacketization by a Factor of $\phi_{K,t,\alpha}^2$}
\label{secion:RED_details}
To reduce the subpacketization requirement, we apply a user grouping mechanism, inspired by~\cite{lampiris2018adding}. For notation simplicity, let us use simply use $\phi$ to represent $\phi_{K,t,\alpha}$. The idea is to split users into groups of size $ \phi_{K,t,\alpha}$ and assume each group is equivalent to a \textit{virtual} user. Then, we consider a virtual network consisting of these virtual users, in which the coded caching and spatial multiplexing gains are $\frac{t}{\phi}$ and $\frac{\alpha}{\phi}$, respectively. Finally, we apply the coded caching scheme proposed in Sections~\ref{section:lin_placement} and~\ref{section:lin_delivery} to the virtual network, and \textit{elevate} the resulting cache placement and delivery schemes to be applicable in the original network. Here, we provide a detailed description of the elevation procedure.

\subsubsection{Cache Placement}
Assume the original network is given with $K$ users, coded caching gain of $t$, and spatial DoF of $\alpha$. 
We first split the set of users $[K]$ into $K'=\frac{K}{\phi}$ disjoints groups $v_{k'}$, $k'\in[K']$, where all groups have the same number of $\phi$ users. Without loss of generality, we assume each group $v_{k'}$ corresponds to the set of users  
\begin{equation}
\label{eq:virtual_original_user_relation}
    v_{k'} \triangleq [\phi*(k'-1)+1:\phi*k']\;, \; \; \forall k' \in [K'] \; .
\end{equation}
Next, we assume each user group is equivalent to a virtual user with cache size of $Mf$ bits, and the virtual users form a new virtual network in which the spatial DoF is $\alpha' = \frac{\alpha}{\phi}$. For this virtual network, we use the same cache placement algorithm presented in Section~\ref{section:lin_placement}, and set the cache content of every user in group $v_{k'}$ to be the same as the cache content of the virtual user corresponding to $v_{k'}$. The total subpacketization is then $K'(t'+\alpha')$, where $t' = \frac{t}{\phi}$.


\subsubsection{Delivery Phase}
During the delivery phase, we first create transmission vectors for the virtual network, and then \textit{elevate} them to be applicable in the original network. Following~\eqref{eq:general_transmission_vector_model}, the transmission vector $i$ for the virtual network is built as
    $\Bx_i' = \sum_{v_{k'} \in \CX_i'} \Bw_i'(v_{k'}) X_i'(v_{k'})$,
in which $\CX_i'$ is the set of virtual users targeted at transmission $i$, $X_i'(v_{k'})$ denotes the data part targeted to the virtual user $v_{k'}$ during the same transmission, and $\Bw_i'(v_{k'})$ is the beamformer vector assigned to $X_i'(v_{k'})$. In order to elevate~$\Bx_i'$ for the original network, we first notice than each virtual user represents a set of $\phi$ original users, and hence, using~\eqref{eq:virtual_original_user_relation}, the set of targeted users during transmission $i$ for the original network is
\begin{equation}
\label{eq:target_user_set_virtual_to_original}
    \CX_i = \bigcup\limits_{v_{k'} \in \CX_i'}  [\phi*(k'-1)+1:\phi*k'] \; .
\end{equation}
Following the discussions in Section~\ref{section:lin_beamformer}, every beamformer vector $\Bw_i'(v_{k'})$ is built to suppress unwanted terms at $\alpha'-1$ virtual users. Let us denote the set of such virtual users as $\CR_i'(v_{k'})$, where $|\CR_i'(v_{k'})| = \alpha'-1$. The goal is then to find the respective set for the original network, denoted by $\CR_i(k)$, for $k \in  [K]$. As the spatial DoF for the original network is $\alpha$, it is possible to suppress undesired terms at $\alpha -1$ original users; i.e. $|\CR_i(k)| = \alpha - 1$. Without loss of generality, let us assume that user $k$ is in the respective group of $v_{k'}$ (every user in the original network has one counterpart in the virtual network). In the original network, for the interference to be suppressed, the following conditions should be met:
\begin{enumerate}
    \item For every ${v}_{\hat{k'}} \in \CR_i'(v_{k'})$, $\CR_i(k)$ should include all the users in the respective group of ${v}_{\hat{k'}}$; i.e. $\CR_i(k)$ should include all the users in $[\phi*(\hat{k'}-1)+1:\phi*\hat{k'}]$;
    \item $\CR_i(k)$ should include all other users in the respective group of $v_{k'}$; i.e. $\CR_i(k)$ should include all the users in $[\phi*(k'-1)+1:\phi*k'] \backslash \{k\}$.
\end{enumerate}
Using a formal representation, we have
\begin{equation}
\label{eq:zero_foce_set_virtual_to_original}
\begin{aligned}
    \CR_i(k) = &[\phi*(k'-1)+1:\phi*k'] \backslash \{ k \} \\
    &\bigcup\limits_{v_{\hat{k'}} \in \CR_i'(v_{k'})}  [\phi*(\hat{k'}-1)+1:\phi*\hat{k'}] \; .
\end{aligned}
\end{equation}
In other words, the data part $X_i(k)$ intended for user $k$ at transmission $i$ has to be suppressed not only at $\alpha'-1$ virtual users (where each virtual user represents $\phi$ original users), but also at $\phi-1$ original users in the equivalent group of $v_{k'}$. Hence, the total number of users for which $X_i(k)$ should be suppressed is $(\alpha'-1)\phi + \phi-1$. Substituting $\alpha' = \frac{\alpha}{\phi}$, we have
\begin{equation}
    |\CR_i(k)| = (\frac{\alpha}{\phi}-1)\phi + \phi-1 = \alpha-1 \; .
\end{equation}

Now, we can elevate $\Bx_i'$ to be applicable in the original network. All we need to do is to  use~\eqref{eq:target_user_set_virtual_to_original} to substitute the target user set $\CX_i'$ with $\CX_i$, and replace $\Bw_i'(v_{k'}) X_i'(v_{k'})$ with
\begin{equation}
    \sum_{k \in [\phi*(k'-1)+1:\phi*k']} \Bw_i(k) X_i(k) \; ,
\end{equation}
where $\Bw_i(k)$ is the beamformer vector designed to suppress unwanted data terms at the user set $\CR_i(k)$ as defined in~\eqref{eq:zero_foce_set_virtual_to_original}, and $X_i(k)$ is the data part intended for user $k$ at transmission $i$. The subpacketization for the virtual network, $K'(t'+\alpha')$, would then be still valid for the original network, indicating a reduction of $\phi^2$ compared with the case when no grouping is applied.

\bibliographystyle{IEEEtran}
\bibliography{references}

\begin{thebibliography}{10}
\providecommand{\url}[1]{#1}
\csname url@samestyle\endcsname
\providecommand{\newblock}{\relax}
\providecommand{\bibinfo}[2]{#2}
\providecommand{\BIBentrySTDinterwordspacing}{\spaceskip=0pt\relax}
\providecommand{\BIBentryALTinterwordstretchfactor}{4}
\providecommand{\BIBentryALTinterwordspacing}{\spaceskip=\fontdimen2\font plus
\BIBentryALTinterwordstretchfactor\fontdimen3\font minus
  \fontdimen4\font\relax}
\providecommand{\BIBforeignlanguage}[2]{{%
\expandafter\ifx\csname l@#1\endcsname\relax
\typeout{** WARNING: IEEEtran.bst: No hyphenation pattern has been}%
\typeout{** loaded for the language `#1'. Using the pattern for}%
\typeout{** the default language instead.}%
\else
\language=\csname l@#1\endcsname
\fi
#2}}
\providecommand{\BIBdecl}{\relax}
\BIBdecl

\bibitem{cisco2018cisco}
V.~N.~I. Cisco, ``{Cisco visual networking index: Forecast and trends,
  2017--2022},'' \emph{White Paper}, vol.~1, 2018.

\bibitem{maddah2014fundamental}
M.~A. Maddah-Ali and U.~Niesen, ``{Fundamental limits of caching},'' \emph{IEEE
  Transactions on Information Theory}, vol.~60, no.~5, pp. 2856--2867, 2014.

\bibitem{rajatheva2020white}
N.~Rajatheva, I.~Atzeni, E.~Bjornson, A.~Bourdoux, S.~Buzzi, J.-B. Dore,
  S.~Erkucuk, M.~Fuentes, K.~Guan, Y.~Hu, and {others}, ``{White Paper on
  Broadband Connectivity in 6G},'' \emph{arXiv preprint arXiv:2004.14247},
  2020.

\bibitem{shariatpanahi2016multi}
S.~P. Shariatpanahi, S.~A. Motahari, and B.~H. Khalaj, ``{Multi-server coded
  caching},'' \emph{IEEE Transactions on Information Theory}, vol.~62, no.~12,
  pp. 7253--7271, 2016.

\bibitem{shariatpanahi2018physical}
S.~P. Shariatpanahi, G.~Caire, and B.~Hossein~Khalaj, ``{Physical-Layer Schemes
  for Wireless Coded Caching},'' \emph{IEEE Transactions on Information
  Theory}, vol.~65, no.~5, pp. 2792--2807, 2019.

\bibitem{lampiris2018resolving}
\BIBentryALTinterwordspacing
E.~Lampiris and P.~Elia, ``{Resolving a feedback bottleneck of multi-antenna
  coded caching},'' \emph{arXiv}, 2018. [Online]. Available:
  \url{http://arxiv.org/abs/1811.03935}
\BIBentrySTDinterwordspacing

\bibitem{shanmugam2016finite}
K.~Shanmugam, M.~Ji, A.~M. Tulino, J.~Llorca, and A.~G. Dimakis.,
  ``{Finite-Length Analysis of Caching-Aided Coded Multicasting},'' \emph{IEEE
  Transactions on Information Theory}, vol.~62, no.~10, pp. 5524--5537, 2016.

\bibitem{yan2017placement}
Q.~Yan, M.~Cheng, X.~Tang, and Q.~Chen, ``{On the placement delivery array
  design for centralized coded caching scheme},'' \emph{IEEE Transactions on
  Information Theory}, vol.~63, no.~9, pp. 5821--5833, 2017.

\bibitem{lampiris2018adding}
E.~Lampiris and P.~Elia, ``{Adding transmitters dramatically boosts
  coded-caching gains for finite file sizes},'' \emph{IEEE Journal on Selected
  Areas in Communications}, vol.~36, no.~6, pp. 1176--1188, 2018.

\bibitem{yu2017exact}
Q.~Yu, M.~A. Maddah-Ali, and A.~S. Avestimehr, ``{The exact rate-memory
  tradeoff for caching with uncoded prefetching},'' \emph{IEEE International
  Symposium on Information Theory - Proceedings}, vol.~64, no.~2, pp.
  1613--1617, 2017.

\bibitem{wan2016optimality}
K.~Wan, D.~Tuninetti, and P.~Piantanida, ``{On the optimality of uncoded cache
  placement},'' in \emph{2016 IEEE Information Theory Workshop, ITW
  2016}.\hskip 1em plus 0.5em minus 0.4em\relax IEEE, 2016, pp. 161--165.

\bibitem{yu2018characterizing}
Q.~Yu, M.~A. Maddah-Ali, and A.~S. Avestimehr, ``{Characterizing the
  rate-memory tradeoff in cache networks within a factor of 2},'' \emph{IEEE
  Transactions on Information Theory}, vol.~65, no.~1, pp. 647--663, 2019.

\bibitem{maddah2015decentralized}
M.~A. Maddah-Ali and U.~Niesen, ``{Decentralized Coded Caching Attains
  Order-Optimal Memory-Rate Tradeoff},'' \emph{IEEE/ACM Transactions on
  Networking}, vol.~23, no.~4, pp. 1029--1040, 2015.

\bibitem{girgis2017decentralized}
A.~M. Girgis, O.~Ercetin, M.~Nafie, and T.~ElBatt, ``{Decentralized coded
  caching in wireless networks: Trade-off between storage and latency},'' in
  \emph{2017 IEEE International Symposium on Information Theory (ISIT)}.\hskip
  1em plus 0.5em minus 0.4em\relax IEEE, 2017, pp. 2443--2447.

\bibitem{xu2018fundamental}
F.~Xu and M.~Tao, ``{Fundamental limits of decentralized caching in fog-RANs
  with wireless fronthaul},'' in \emph{arXiv}.\hskip 1em plus 0.5em minus
  0.4em\relax IEEE, 2018, pp. 1430--1434.

\bibitem{jin2019new}
S.~Jin, Y.~Cui, H.~Liu, and G.~Caire, ``{A new order-optimal decentralized
  coded caching scheme with good performance in the finite file size regime},''
  \emph{IEEE Transactions on Communications}, vol.~67, no.~8, pp. 5297--5310,
  2019.

\bibitem{karamchandani2016hierarchical}
N.~Karamchandani, U.~Niesen, M.~A. Maddah-Ali, and S.~N. Diggavi,
  ``{Hierarchical Coded Caching},'' \emph{IEEE Transactions on Information
  Theory}, vol.~62, no.~6, pp. 3212--3229, 2016.

\bibitem{ji2015fundamental}
M.~Ji, G.~Caire, and A.~F. Molisch, ``{Fundamental limits of caching in
  wireless D2D networks},'' \emph{IEEE Transactions on Information Theory},
  vol.~62, no.~2, pp. 849--869, 2016.

\bibitem{roig2018storage}
J.~S.~P. Roig, F.~Tosato, and D.~G{\"{u}}nd{\"{u}}z, ``{Storage-latency
  trade-off in cache-aided fog radio access networks},'' in \emph{2018 IEEE
  International Conference on Communications (ICC)}.\hskip 1em plus 0.5em minus
  0.4em\relax IEEE, 2018, pp. 1--6.

\bibitem{naderializadeh2017fundamental}
N.~Naderializadeh, M.~A. Maddah-Ali, and A.~S. Avestimehr, ``{Fundamental
  Limits of Cache-Aided Interference Management},'' \emph{IEEE Transactions on
  Information Theory}, vol.~63, no.~5, pp. 3092--3107, 2017.

\bibitem{naderializadeh2019cache}
------, ``{Cache-aided interference management in wireless cellular
  networks},'' \emph{IEEE Transactions on Communications}, vol.~67, no.~5, pp.
  3376--3387, 2019.

\bibitem{piovano2019centralized}
E.~Piovano, H.~Joudeh, and B.~Clerckx, ``{Centralized and decentralized
  cache-aided interference management in heterogeneous parallel channels},''
  \emph{IEEE Transactions on Communications}, vol.~68, no.~3, pp. 1881--1896,
  2019.

\bibitem{hachem2018degrees}
J.~Hachem, U.~Niesen, and S.~N. Diggavi, ``{Degrees of freedom of cache-aided
  wireless interference networks},'' \emph{IEEE Transactions on Information
  Theory}, vol.~64, no.~7, pp. 5359--5380, 2018.

\bibitem{shariatpanahi2017multi}
S.~P. Shariatpanahi, G.~Caire, and B.~H. Khalaj, ``{Multi-antenna coded
  caching},'' in \emph{2017 IEEE International Symposium on Information Theory
  (ISIT)}.\hskip 1em plus 0.5em minus 0.4em\relax IEEE, 2017, pp. 2113--2117.

\bibitem{tolli2018multicast}
A.~T{\"{o}}lli, S.~P. Shariatpanahi, J.~Kaleva, and B.~Khalaj, ``{Multicast
  Beamformer Design for Coded Caching},'' in \emph{IEEE International Symposium
  on Information Theory - Proceedings}, vol. 2018-June.\hskip 1em plus 0.5em
  minus 0.4em\relax IEEE, 2018, pp. 1914--1918.

\bibitem{tolli2017multi}
A.~T{\"{o}}lli, S.~P. Shariatpanahi, J.~Kaleva, and B.~H. Khalaj,
  ``{Multi-antenna interference management for coded caching},'' \emph{IEEE
  Transactions on Wireless Communications}, vol.~19, no.~3, pp. 2091--2106,
  2020.

\bibitem{lampiris2019bridgingb}
E.~Lampiris and P.~Elia, ``{Bridging two extremes: Multi-antenna Coded Caching
  with Reduced Subpacketization and CSIT},'' in \emph{IEEE Workshop on Signal
  Processing Advances in Wireless Communications, SPAWC}, vol. 2019-July.\hskip
  1em plus 0.5em minus 0.4em\relax IEEE, 2019.

\bibitem{zhao2020multi}
J.~Zhao, M.~M. Amiri, and D.~G{\"{u}}nd{\"{u}}z, ``{Multi-Antenna Coded Content
  Delivery with Caching: A Low-Complexity Solution},'' \emph{arXiv preprint
  arXiv:2001.01255}, 2020.

\bibitem{shariatpanahi2018multi}
\BIBentryALTinterwordspacing
S.~P. Shariatpanahi and B.~H. Khalaj, ``{On Multi-Server Coded Caching in the
  Low Memory Regime},'' \emph{arXiv preprint arXiv:1803.07655}, pp. 1--12,
  2018. [Online]. Available: \url{https://arxiv.org/pdf/1803.07655.pdf}
\BIBentrySTDinterwordspacing

\bibitem{lampiris2019fundamental}
\BIBentryALTinterwordspacing
E.~Lampiris, J.~Zhang, O.~Simeone, and P.~Elia, ``{Fundamental Limits of
  Wireless Caching under Uneven-Capacity Channels},'' \emph{arXiv preprint
  arXiv:1908.04036}, 2019. [Online]. Available:
  \url{http://arxiv.org/abs/1908.04036}
\BIBentrySTDinterwordspacing

\bibitem{bergel2018cache}
I.~Bergel and S.~Mohajer, ``{Cache-aided communications with multiple antennas
  at finite SNR},'' \emph{IEEE Journal on Selected Areas in Communications},
  vol.~36, no.~8, pp. 1682--1691, 2018.

\bibitem{yan2018placement}
Q.~Yan, X.~Tang, Q.~Chen, and M.~Cheng, ``{Placement Delivery Array Design
  Through Strong Edge Coloring of Bipartite Graphs},'' \emph{IEEE
  Communications Letters}, vol.~22, no.~2, pp. 236--239, 2018.

\bibitem{shangguan2018centralized}
C.~Shangguan, Y.~Zhang, and G.~Ge, ``{Centralized coded caching schemes: A
  hypergraph theoretical approach},'' \emph{IEEE Transactions on Information
  Theory}, vol.~64, no.~8, pp. 5755--5766, 2018.

\bibitem{shanmugam2017coded}
K.~Shanmugam, A.~M. Tulino, and A.~G. Dimakis, ``{Coded caching with linear
  subpacketization is possible using Ruzsa-Szem{\'{e}}redi graphs},'' in
  \emph{IEEE International Symposium on Information Theory -
  Proceedings}.\hskip 1em plus 0.5em minus 0.4em\relax IEEE, 2017, pp.
  1237--1241.

\bibitem{chittoor2020subexponential}
H.~H.~S. Chittoor, P.~Krishnan, K.~V. Sree, and B.~MVN, ``{Subexponential and
  linear subpacketization coded caching via line graphs and projective
  geometry},'' \emph{arXiv preprint arXiv:2001.00399}, 2020.

\bibitem{tang2018coded}
L.~Tang and A.~Ramamoorthy, ``{Coded caching schemes with reduced
  subpacketization from linear block codes},'' \emph{IEEE Transactions on
  Information Theory}, vol.~64, no.~4, pp. 3099--3120, 2018.

\bibitem{michel2020placement}
J.~Michel and Q.~Wang, ``{Placement delivery arrays from combinations of strong
  edge colorings},'' \emph{IEEE Transactions on Communications}, vol.~68,
  no.~10, pp. 5953--5964, 2020.

\bibitem{mahesh2020coded}
A.~A. Mahesh and B.~S. Rajan, ``{A Coded Caching Scheme with Linear
  Sub-packetization and its Application to Multi-Access Coded Caching},''
  \emph{arXiv preprint arXiv:2009.10923}, 2020.

\bibitem{parrinello2019fundamental}
E.~Parrinello, A.~{\"{U}}nsal, and P.~Elia, ``{Fundamental Limits of Coded
  Caching with Multiple Antennas, Shared Caches and Uncoded Prefetching},''
  \emph{IEEE Transactions on Information Theory}, 2019.

\bibitem{mohajer2020miso}
S.~Mohajer and I.~Bergel, ``{MISO Cache-Aided Communication with Reduced
  Subpacketization},'' in \emph{ICC 2020-2020 IEEE International Conference on
  Communications (ICC)}.\hskip 1em plus 0.5em minus 0.4em\relax IEEE, 2020, pp.
  1--6.

\bibitem{lampiris2019bridging}
E.~Lampiris, P.~Elia, and G.~Caire, ``{Bridging the gap between multiplexing
  and diversity in finite SNR multiple antenna coded caching},'' in \emph{2019
  53rd Asilomar Conference on Signals, Systems, and Computers}.\hskip 1em plus
  0.5em minus 0.4em\relax IEEE, 2019, pp. 1272--1277.

\bibitem{salehi2019subpacketization}
M.~Salehi, A.~Tolli, S.~P. Shariatpanahi, and J.~Kaleva,
  ``{Subpacketization-rate trade-off in multi-antenna coded caching},'' in
  \emph{2019 IEEE Global Communications Conference, GLOBECOM 2019 -
  Proceedings}.\hskip 1em plus 0.5em minus 0.4em\relax IEEE, 2019, pp. 1--6.

\bibitem{salehi2019beamformer}
M.~J. Salehi, A.~Tolli, and S.~P. Shariatpanahi, ``{Subpacketization-beamformer
  interaction in multi-antenna coded caching},'' in \emph{2nd 6G Wireless
  Summit 2020: Gain Edge for the 6G Era, 6G SUMMIT 2020}, 2020, pp. 1--5.

\bibitem{vorobyov2003robust}
S.~A. Vorobyov, A.~B. Gershman, and Z.-Q. Luo, ``{Robust adaptive beamforming
  using worst-case performance optimization: A solution to the signal mismatch
  problem},'' \emph{IEEE transactions on signal processing}, vol.~51, no.~2,
  pp. 313--324, 2003.

\bibitem{Komulainen2013CoordinatedMode}
\BIBentryALTinterwordspacing
P.~Komulainen, ``{Coordinated multi-antenna techniques for cellular networks :
  Pilot signaling and decentralized optimization in TDD mode},'' Ph.D.
  dissertation, University of Oulu, 2013. [Online]. Available:
  \url{http://urn.fi/urn:isbn:9789526202815}
\BIBentrySTDinterwordspacing

\bibitem{bengtsson2001optimal}
M.~Bengtsson and B.~Ottersten, ``{Optimal and Suboptimal Transmit
  Beamforming},'' no. July, pp. 18--1, 2001.

\bibitem{wiesel2005linear}
A.~Wiesel, Y.~C. Eldar, and S.~Shamai, ``{Linear precoding via conic
  optimization for fixed MIMO receivers},'' \emph{IEEE Transactions on Signal
  Processing}, vol.~54, no.~1, pp. 161--176, 2006.

\bibitem{yates1995framework}
R.~D. Yates, ``{A framework for uplink power control in cellular radio
  systems},'' \emph{IEEE Journal on selected areas in communications}, vol.~13,
  no.~7, pp. 1341--1347, 1995.

\end{thebibliography}

\end{document}